\DeclareMathOperator{\const}{const}
\DeclareMathOperator{\Ker}{Ker}
\DeclareMathOperator{\LinearImage}{Im}
\newcommand{\tildetimes}{\mathbin{\widetilde{\times}}}
\newcommand{\strutik}{\vrule height 2.6ex depth 1ex width 0pt}
\newtheorem{theorem}{Theorem}
\newtheorem{lemma}{Lemma}
\theoremstyle{remark}
\newtheorem*{remark}{Remark}
\author{I.G. Korepanov, N.M. Sadykov}
\title{Hexagon cohomologies and polynomial TQFT actions}
\date{July 2017--January 2018}
\begin{document}

\sloppy

\maketitle

\begin{abstract}
Hexagon relations are combinatorial or algebraic realizations of four-dimen\-sional Pachner moves. We introduce some simple set-theoretic hexagon relations and then `quantize' them using what we call `polynomial hexagon cohomologies'. Based on this, topological quantum field theories are proposed with polynomial `discrete Lagrangian densities' taking values in finite fields. First calculations of the resulting manifold invariants, arising from polynomial cocycles of degree three and in characteristic two, show their nontriviality.
\end{abstract}

\section{Introduction}\label{s:i}

Let $M$ be a triangulated closed $n$-dimensional PL (\,=\,piecewise linear) manifold. Suppose we can `color' the $m$-faces of this triangulation, for a fixed $m<n$, that is, assign an element of a fixed set~$X$ called the \emph{set of colors}, to each $m$-face (in this paper, we will consider the case $n=4$, \ $m=3$). For each of the simplices~$u$ of the highest dimension~$n$, let there be given a set
\[
R_u \subset \underbrace{X\times \dots \times X}_{\substack{\text{one copy of }X\\ \text{for each }m\text{-face of }u}} 
\]
of `permitted colorings' of (all $m$-faces of)~$u$. A coloring of (all $m$-faces in) a simplicial complex is called permitted if its restrictions on all the $n$-simplices are permitted.

In particular, we can consider the set~$R_M$ of permitted colorings for the whole~$M$. Suppose it is finite, and we want to build somehow a PL manifold invariant based on its cardinality~$|R_M|$. We know that any triangulation of~$M$ can be transformed into any other triangulation using a sequence of \emph{Pachner moves}~\cite{Pachner,Lickorish}. A Pachner move transforms a cluster of $n$-simplices into another cluster having the same boundary and occupying the same place in a triangulation. We will sometimes call the initial cluster the \emph{l.h.s.}\ of the move, while the final cluster its \emph{r.h.s.}

A natural desire is thus to have some nice correspondence between the permitted colorings for the l.h.s.\ and r.h.s.\ of any Pachner move. This idea leads to what can be called \emph{combinatorial realizations of Pachner moves}. These can also be called `set-theoretic $(n+2)$-gon relations', because such terminology appears to be commonly associated with Pachner moves in $n$ dimensions, see, e.g.,~\cite{DM}.

In the present paper, we work in dimension $n=4$. Accordingly, our first subject of interest will be set-theoretic \emph{hexagon} (\,=\,6-gon). We will see that it leads to invariants of the form
\begin{equation}\label{R/cN}
\frac{|R_M|}{\const_0^{N_0}\cdot\const_4^{N_4}},
\end{equation}
where $N_k$ is the total number of $k$-simplices in the triangulation.

Invariant~\eqref{R/cN}, or at least its numerator---the total number of colorings---resembles, in our opinion, \emph{quandle invariants} in the theory of knots and their higher analogues, see, e.g.,~\cite{CKS}. The next fruitful idea known from that theory consists in using \emph{quandle cohomology}. It turns out that similar constructions work in our case as well, and  there exist some \emph{very nontrivial} `hexagon cocycles'. As a result, the number~$|R_M|$ in~\eqref{R/cN} splits into a sum, where each summand separately produces an invariant.

Introducing cocycles can be likened, in our opinion, to a quantization of a `classical' theory. Note also that our invariants can be rewritten in a form that deserves the name of `discrete analogue of Feynman integral', with permitted colorings playing the role of `fields', and corresponding cocycle values playing the role of `Lagrangian density'.

Below,
\begin{itemize}\itemsep 0pt
 \item in Section~\ref{s:set-theoretic}, we consider set-theoretic hexagon relations,
 \item in Section~\ref{s:l}, we provide an explicit example of such relation using linear spaces over a finite field,
 \item in Section~\ref{s:cohomology}, we introduce our polynomial hexagon cohomology,
 \item in Section~\ref{s:actions}, we write out explicitly all the nontrivial polynomial cocycles for polynomial degrees $\le 6$ and characteristics $\le 5$. These cocycles can be interpreted as discrete Lagrangian densities for TQFT's (\,=\,topological quantum field theories),
 \item in Section~\ref{s:results}, we present first calculation results for manifold invariants arising from one of these TQFT's,
 \item finally, in Section~\ref{s:d}, we discuss directions for further research.
 \item Also, there is Appendix~\ref{s:app}, where we briefly explain some results appearing in~\cite{bosonic} and relevant for the present paper.
\end{itemize}

\subsection*{Division of labor between the authors}
The discovery of polynomial TQFT's in finite characteristics is due to I.K., starting with one cubic polynomial~\eqref{2,3a} in characteristic two, see conference report~\cite{report}. Further development of the theory presented here is a joint work. Computer calculations of manifold invariants were done by N.S.\ using, in particular, his specialized GAP \href{https://sourceforge.net/projects/plgap/}{package} for calculations with PL manifolds.

\section{Four-dimen\-sional Pachner moves and set theoretic hexagon relations}\label{s:set-theoretic}

\subsection{Pachner moves}\label{ss:P}

Consider a 5-simplex $\Delta^5=123456$ (i.e., whose vertices are numbered from 1 to~6). Its boundary~$\partial \Delta^5$ consists of six pentachora (\,=\,4-simplices). Imagine that $k$ of these pentachora, $1\le k\le 5$, enter in a triangulation of PL manifold~$M$. Then we can replace them with the remaining $6-k$ pentachora, without changing~$M$. This is called Pachner move, and there are five kinds of them: 1--5, 2--4, 3--3, 4--2, and 5--1; here the number before the dash is~$k$, while the second number is, of course, $6-k$.
 
\subsection{Ordering of vertices}\label{ss:o}

We assume that all vertices in every triangulation we are considering are \emph{ordered}, most often---due to the fact that they are \emph{numbered} from~1 to their total number. Of course our manifold invariants introduced in Sections \ref{s:l} and~\ref{s:cohomology} will be independent of this ordering.

When we speak about an individual pentachoron, like in formula~\eqref{R12345} below, we may denote it as~$12345$---but we keep in mind that our constructions or/and statements are also valid for any pentachoron~$ijklm$, \ $i<j<k<l<m$, if we do the obvious replacements $1\mapsto 1$, \ldots, $5\mapsto m$.

Similarly, the description of Pachner moves in the above Subsection~\ref{ss:P} stays valid, of course, for any vertices $i,\ldots,n$ instead of~$1,\ldots,6$.

\subsection{Permitted colorings}\label{ss:perm}

Let a finite set~$X$ be given, called \emph{the set of colors}. We will assign a color $\mathsf x\in X$ to each \emph{3-face} of a simplicial complex such as~$\Delta^5$ or triangulated manifold~$M$. Not all colorings, however, are \emph{permitted}.

For one pentachoron, let it be 12345, permitted colorings are determined by definition by a given subset
\begin{equation}\label{R12345}
R_{12345}\subset X_{2345}\times X_{1345}\times X_{1245}\times X_{1235}\times X_{1234}
\end{equation}
in the Cartesian product of copies of~$X$ corresponding to its 3-faces. For other pentachora~$u$ in question, permitted colorings are determined by copies~$R_u$ of~$R_{12345}$, in accordance with our Subsection~\ref{ss:o}.

\begin{remark}
An interesting theory appears also in the case of \emph{nonconstant}~$R_u$, i.e., if the permitted colorings of different pentachora are different. In this paper, we will, however, confine ourselves to the case of constant~$R_u$.
\end{remark}

For a cluster~$C$ of pentachora obtained by gluing them along their 3-faces---such as the l.h.s.\ or r.h.s.\ of a Pachner move, or the whole manifold~$M$---permitted coloring is by definition such a coloring of all 3-faces (including inner faces where gluing has been done) whose restriction onto each pentachoron is permitted. We denote the set of permitted colorings of~$C$ as $R_C\subset \prod_{t\subset C} X_t$.

\subsection{Full set theoretic hexagon}\label{ss:full}

Take any subcomplex~$C\subset\partial \Delta^5$ containing $k$ pentachora, \ $1\le k\le 5$, and its complementary subcomplex~$\bar C$ containing the remaining $6-k$ pentachora. We impose the following requirements on the sets~$R_u$:

\begin{enumerate}\itemsep 0pt
 \item\label{i:CbarC} the restrictions onto $\partial C=\partial \bar C$ of permitted colorings of any such~$C$ make up the same set of colorings of this common boundary as the restrictions of permitted colorings of~$\bar C$,
 \item\label{i:ak} additionally, there are fixed natural numbers---multiplicities~$a_k$, \ $1\le k\le 5$, such that to any chosen coloring of $\partial C=\partial \bar C$ correspond exactly $a_k$ colorings of~$C$ (that is, taking into account its inner 3-faces), and exactly $a_{6-k}$ colorings of~$\bar C$.
\end{enumerate}

Item~\ref{i:ak} says thus that any coloring of a boundary extends in a fixed number of ways onto the whole $C$ or~$\bar C$.

\medskip

If both conditions \ref{i:CbarC} and~\ref{i:ak} hold, we will say that the \emph{full set theoretic hexagon}, or simply \emph{full hexagon}, is satisfied.

\section{Linear set theoretic hexagon}\label{s:l}

\subsection{Linear set of colors}\label{ss:ll}

Now let the set~$X$ of colors be a \emph{two-dimensional linear space} over a \emph{finite} field~$F$. We identify this space with~$F^2$, and write its elements as
\begin{equation}\label{sft}
\mathsf x_t = \begin{pmatrix} x_t \\ y_t \end{pmatrix} \in X=F^2,
\end{equation}
if we mean the color of tetrahedron~$t$. For each pentachoron~$u$, let
\begin{equation}\label{Ru}
R_u\subset \bigoplus_{\text{tetrahedra\;}t\subset u}X_t
\end{equation}
be a \emph{five-dimensional linear subspace} in the ten-dimensional direct sum of copies of~$X$.

Na\"ively speaking, we leave free to change exactly half of the ten parameters corresponding to the five 3-faces of~$u$.
    
\subsection{A constant full hexagon solution}

Specifically, we introduce, and will be using throughout this paper, the following set~$R_u$ of permitted colorings. For pentachoron $u=12345$, subspace $R_u$~\eqref{Ru} is, by definition, given by the following five linear relations:
\begin{equation}\label{y-any-char}
\begin{pmatrix} y_{2345}\\ y_{1345}\\ y_{1245}\\ y_{1235}\\ y_{1234} \end{pmatrix} = 
\begin{pmatrix}0 & -2 & 1 & 1 & -2\\
0 & -1 & 0 & 1 & -1\\
-1 & 2 & -2 & 0 & 1\\
-1 & 3 & -2 & -1 & 2\\
0 & 1 & -1 & 0 & 0\end{pmatrix}
\begin{pmatrix} x_{2345}\\ x_{1345}\\ x_{1245}\\ x_{1235}\\ x_{1234} \end{pmatrix}. 
\end{equation}
The same definition holds also for other pentachora, with only the relevant substitution of indices, see Subsection~\ref{ss:o}.

Below in this Subsection we prove some statements, relying mostly on direct calculations using the explicit form of the $5\times 5$ matrix in~\eqref{y-any-char}, as well as \eqref{d:psi} for the mapping~$\psi$ introduced below. The reader interested in the \emph{origin} of these explicit formulas can find a brief explanation in Appendix~\ref{s:app}.

We will also denote the columns of $x$'s and~$y$'s in~\eqref{y-any-char} as $\mathbf x$ and~$\mathbf y$, respectively, while the $5\times 5$ matrix as~$\mathcal R$. Equation~\eqref{y-any-char} acquires then the form
\begin{equation}\label{R}
\mathbf y=\mathcal R\mathbf x.
\end{equation}

A fundamental property of matrix~$\mathcal R$ is related to an $F$-linear mapping~$\psi$ that we are going to define,
\begin{equation}\label{psi}
\psi\colon\quad (F\text{-colorings of edges}) \to (\text{colorings of }M).
\end{equation}
Here $M$ is a triangulated 4-manifold, maybe with boundary, and with ordered vertices (as we agreed in Subsection~\ref{ss:o}). $F$-colorings of edges are formal linear combinations of edges with coefficients in~$F$; the edges are understood as \emph{unoriented}: $ij=ji$. Colorings of~$M$ are, according to Subsections \ref{ss:perm} and~\ref{ss:ll}, $F^2$-colorings of tetrahedra---formal linear combinations of tetrahedra with coefficients in~$F^2\ni \begin{pmatrix} x \\ y \end{pmatrix}$; the tetrahedra are of course also understood as unoriented. Below, when we introduce an edge $b=ij$ or a tetrahedron $t=ijkl$, we assume that the vertices go in the increasing order: $i<j$ or, respectively, $i<j<k<l$.

It will be convenient for us to regard the linear mapping~$\psi$ as a matrix whose rows and columns correspond to the tetrahedra and edges, respectively, and whose entries are \emph{2-columns}:
\begin{equation}\label{xy}
\psi_{t,b} = \begin{pmatrix} x \\ y \end{pmatrix}.
\end{equation}
These columns are defined as follows. First,
\begin{equation}\label{tnb}
\text{if}\quad t\not\supset b,\quad \text{then} \quad x=y=0. 
\end{equation}
Second, for a tetrahedron $t=ijkl$ there are six edges $b\subset t$, and we write the six corresponding columns together as a $2\times 6$ matrix as follows:
\begin{equation}\label{d:psi}
\begin{pmatrix} \psi_{t,ij} & \psi_{t,ik} & \psi_{t,il} & \psi_{t,jk} & \psi_{t,jl} & \psi_{t,kl} \end{pmatrix} = \begin{pmatrix} -1 & 2 & -1 & -1 & 0 & 1 \\ 1 & -1 & 0 & 0 & 1 & -1 \end{pmatrix} .
\end{equation}

\begin{theorem}\label{th:d:R}
The image of any $F$-coloring of edges under mapping~$\psi$ gives a\/ \emph{permitted} coloring:
\begin{equation}\label{d:su}
\LinearImage \psi \subset R_M .
\end{equation}
\end{theorem}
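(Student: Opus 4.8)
The plan is to reduce the statement to a single pentachoron and then to an identity between explicit integer matrices. First I would use that, by definition, a coloring of $M$ lies in $R_M$ precisely when its restriction to every pentachoron $u$ lies in $R_u$. By \eqref{tnb}, the color that $\psi$ assigns to a tetrahedron $t$ involves only the edge-values $\phi(b)$ with $b\subset t$; hence the restriction of $\psi(\phi)$ to a pentachoron $u$ depends only on the restriction of $\phi$ to the ten edges of $u$, and the resulting coloring is automatically well defined on faces shared by two pentachora. So it suffices to check, for one pentachoron, that the five $2$-columns produced by $\psi$ satisfy the defining relation~\eqref{R}. Moreover, both $\psi$ (its template~\eqref{d:psi} refers only to the relative order of the four vertices of $t$) and $R_u$ (Subsection~\ref{ss:o}) are covariant under order-preserving relabeling of vertices, so it is enough to treat $u=12345$.

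Next I would make the local map explicit. Writing $e_{ij}=\phi(ij)$ for the ten edge-values of $12345$, I would read off from~\eqref{d:psi}, for each tetrahedron $ijkl\subset 12345$ (with $i<j<k<l$), the linear expressions $x_{ijkl}=-e_{ij}+2e_{ik}-e_{il}-e_{jk}+e_{kl}$ and $y_{ijkl}=e_{ij}-e_{ik}+e_{jl}-e_{kl}$. Collecting these over the five tetrahedra $2345,1345,1245,1235,1234$ yields two $5\times 10$ integer matrices $A$ and $B$ with $\mathbf{x}=A\mathbf{e}$ and $\mathbf{y}=B\mathbf{e}$, where $\mathbf{e}$ is the column of the ten edge-values.

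With this in hand, the theorem becomes the assertion that $\mathbf{y}=\mathcal{R}\mathbf{x}$ for every $\mathbf{e}$, i.e.\ the single matrix identity $B=\mathcal{R}A$. Since all entries are integers, verifying it once settles the claim over every finite field $F$ simultaneously. I would check it row by row; for instance, the first row of $\mathcal{R}$ demands $y_{2345}=-2x_{1345}+x_{1245}+x_{1235}-2x_{1234}$, and substituting the above expressions makes the coefficients of $e_{12},e_{13},e_{14},e_{15},e_{34}$ (and $e_{25}$) vanish and leaves exactly $e_{23}-e_{24}+e_{35}-e_{45}=y_{2345}$. The remaining four rows are handled identically.

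The computation carries no conceptual difficulty, so the only real work is bookkeeping: listing the six edges of each $ijkl$ in the exact order $ij,ik,il,jk,jl,kl$ demanded by~\eqref{d:psi}, and keeping straight which edges are shared among the five tetrahedra so that the coefficients are assembled correctly into $A$ and $B$. I expect the identity $B=\mathcal{R}A$ to hold because the entries of $\mathcal{R}$ in~\eqref{y-any-char} and of the template in~\eqref{d:psi} were evidently chosen precisely so as to make it hold; the substance of the theorem is exactly that these two pieces of explicit data fit together.
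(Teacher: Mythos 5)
Your proposal is correct and follows essentially the same route as the paper: the paper's proof likewise reduces to the single pentachoron $u=12345$ (justified by the definition of $R_M$ via restrictions and the order-covariance of Subsection~\ref{ss:o}) and then invokes a direct calculation, which you have merely made explicit as the integer matrix identity $B=\mathcal{R}A$. Your sample verification of the first row (yielding $e_{23}-e_{24}+e_{35}-e_{45}=y_{2345}$) is accurate, and your observation that integrality of the entries settles the claim over every field $F$ at once is a valid, if implicit, refinement of the paper's argument.
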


\begin{proof}
It is clearly enough to prove this for just one pentachoron, $M=u=12345$, and its ten edges $b=12,\ldots,45$. This can be done by a direct calculation.
\end{proof}

\begin{lemma}\label{l:d:R}
For $M=\partial \Delta^5$  (recall that this is the l.h.s.\ and r.h.s.\ of any Pachner move together, see Subsection~\ref{ss:P}), image of~$\psi$ gives \emph{all} permitted colorings.
\end{lemma}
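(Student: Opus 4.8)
The plan is to establish the two inclusions separately. The inclusion $\LinearImage\psi\subseteq R_{\partial\Delta^5}$ is exactly Theorem~\ref{th:d:R}, so it remains to prove the reverse one, and for this I would simply match dimensions: I will show that $\LinearImage\psi$ and $R_{\partial\Delta^5}$ are both $9$-dimensional over $F$, which together with the inclusion forces $\LinearImage\psi=R_{\partial\Delta^5}$. Here $\partial\Delta^5$ has $\binom 62=15$ edges, $\binom 64=15$ tetrahedra and $\binom 65=6$ pentachora, and each tetrahedron lies in exactly two of the pentachora.

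First I would compute $\dim R_{\partial\Delta^5}$. Since the subspace $R_u$ is the graph $\mathbf y=\mathcal R\mathbf x$ of \eqref{y-any-char}--\eqref{R}, the five $x$-coordinates of any one pentachoron are free and determine its five $y$-coordinates. Hence a permitted coloring of the whole $\partial\Delta^5$ amounts to a choice of the $15$ tetrahedral $x$-coordinates subject only to the requirement that, for each tetrahedron, the two values of its $y$-coordinate produced by the two pentachora containing it agree. Writing $\mathcal R$ out for each of the six pentachora with the vertex relabelling of Subsection~\ref{ss:o} turns this into one linear ``consistency'' equation per tetrahedron, a system of $15$ equations in the $15$ unknown $x$-coordinates, so that $\dim R_{\partial\Delta^5}$ equals $15$ minus the rank of this system. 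A direct reduction shows the rank to be $6$: many equations coincide or cancel in pairs---for example the conditions attached to the tetrahedra $1456$ and $1234$ turn out literally identical---and after elimination only six independent relations remain. Thus $\dim R_{\partial\Delta^5}=9$.

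Next I would compute $\dim\LinearImage\psi=15-\dim\Ker\psi$ by describing $\Ker\psi$. Two families of edge-colorings obviously map to zero: the coboundaries $e_{ij}=\varphi(j)-\varphi(i)$ associated with a vertex function $\varphi$, and the constant coloring $e_{ij}\equiv1$. That both lie in $\Ker\psi$ is a one-line substitution into \eqref{d:psi}, of the same kind as in the proof of Theorem~\ref{th:d:R}: the entries of each of the two rows of \eqref{d:psi} sum to zero, which kills the constant coloring, and a short computation kills the coboundaries. The coboundaries span a $5$-dimensional space (vertex functions modulo constants) and the constant coloring is not one of them, so $\dim\Ker\psi\ge6$. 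To see there is nothing more, it suffices to exhibit nine edge-colorings with linearly independent images---equivalently, one nonvanishing $9\times9$ minor of the $30\times15$ matrix of $\psi$---whence $\dim\LinearImage\psi\ge9$, forcing $\dim\Ker\psi=6$ and $\dim\LinearImage\psi=9$.

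With both dimensions equal to $9$ and $\LinearImage\psi\subseteq R_{\partial\Delta^5}$, the two spaces coincide. The real work, and the only genuine obstacle, sits in the first step: since $\mathcal R$ is not invariant under permutations of the vertices, its six relabelled copies look quite different, and one must write out all fifteen consistency equations accurately before their numerous dependencies surface and the rank drops to $6$. One should also note that the elimination can be arranged to use only $\pm1$ pivots, so that all the rank counts---and hence the equality $\LinearImage\psi=R_{\partial\Delta^5}$---hold over every finite field $F$, as required for the later sections.
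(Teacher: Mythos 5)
Your proposal is correct and takes essentially the same route as the paper: the paper's own proof is exactly the combination of the inclusion $\LinearImage\psi\subset R_{\partial\Delta^5}$ from Theorem~\ref{th:d:R} with a direct calculation showing $\dim\LinearImage\psi=\dim R_{\partial\Delta^5}=9$, which is what you do. Your version merely spells out how that calculation can be organized (kernel of $\psi$ via coboundaries plus the constant coloring, and the per-tetrahedron consistency equations, whose coincidence for $1234$ and $1456$ does indeed check out), so it is a fleshed-out form of the paper's argument rather than a different one.
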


\begin{proof}
Indeed, a calculation shows that, in this case,
\[
\dim\LinearImage \psi=\dim R_{\partial\Delta^5} =9.
\]
\end{proof}

\begin{theorem}\label{th:g}
Full hexagon holds indeed if permitted colorings are defined according to~\eqref{y-any-char}. Multiplicities~$a_k$ are as follows:
\begin{equation}\label{ak}
a_1=1,\quad a_2=1,\quad a_3=1,\quad a_4=|F|,\quad a_5=|F|^4,
\end{equation}
where $|F|$ is the cardinality of~$F$.
\end{theorem}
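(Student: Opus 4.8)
The plan is to translate the full-hexagon conditions into linear algebra over $F$ and then pin down a handful of dimensions by direct computation with the matrix $\mathcal R$ from~\eqref{y-any-char} and the map $\psi$ from~\eqref{d:psi}. Since each $R_u$ is a linear subspace, for every cluster $C$ the set $R_C$ of permitted colorings is an $F$-vector space, all the cardinalities in question are powers of $|F|$, and the restriction $\rho_C\colon R_C\to(\text{colorings of }\partial C)$ is $F$-linear. A linear map has all its nonempty fibres of the same size $|\Ker\rho_C|$, so condition~\ref{i:ak} is automatic once we set $a_k=|\Ker\rho_C|=|F|^{\dim\Ker\rho_C}$, provided $\dim\Ker\rho_C$ depends only on $k$; the values in~\eqref{ak} are then exactly the assertion that these dimensions equal $0,0,0,1,4$ for $k=1,\dots,5$.

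For condition~\ref{i:CbarC} it suffices to establish that the restriction $R_{\partial\Delta^5}\to R_C$ is surjective for every cluster $C$. Indeed, both $\LinearImage\rho_C$ and $\LinearImage\rho_{\bar C}$ contain the image $B$ of $R_{\partial\Delta^5}$ under restriction to the common boundary $\partial C=\partial\bar C$, because a globally permitted coloring restricts to permitted colorings of $C$ and of $\bar C$ that agree there; conversely, if every permitted coloring of $C$ is the restriction of a global one, then $\LinearImage\rho_C\subseteq B$, and likewise for $\bar C$, forcing $\LinearImage\rho_C=B=\LinearImage\rho_{\bar C}$. The surjectivity itself I would read off from $\psi$: by Lemma~\ref{l:d:R} we have $\LinearImage\psi=R_{\partial\Delta^5}$, so the image of $R_{\partial\Delta^5}$ under restriction to the tetrahedra of $C$ equals $\LinearImage(\mathrm{res}_C\circ\psi)$, which lies in $R_C$ by Theorem~\ref{th:d:R}; hence it remains only to verify the rank equality $\dim\LinearImage(\mathrm{res}_C\circ\psi)=\dim R_C$, a finite linear-algebra check for each type of cluster.

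It then remains to compute $\dim\Ker\rho_C$, that is, the space of permitted colorings of $C$ supported on its inner $3$-faces. For $k\le 3$ there are only $0$, $1$, and $3$ inner tetrahedra, and a short calculation with $\mathcal R$ shows the only such permitted coloring is the zero one, giving $a_1=a_2=a_3=1$. The substantive cases are $k=4$ and $k=5$, where one must exhibit the nonzero boundary-vanishing permitted colorings and show they span spaces of dimension $1$ and $4$ respectively. I expect this to be the main obstacle, as it is where the cocycle-like freedom first surfaces; a helpful guide is the geometry of the $(1,5)$ move, in which the four extra dimensions should correspond to edge-colorings supported on the five edges from the new central vertex to the old ones, taken modulo the single relation coming from $\Ker\psi$, with the lone extra dimension for $k=4$ arising analogously.

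Finally I would note that $\mathcal R$ is not invariant under permutations of the vertices, so in principle each dimension count must be confirmed for every combinatorial placement of the cluster; in practice one checks a standard representative of each move type and appeals to the index-substitution convention of Subsection~\ref{ss:o}, leaving the genuine independence of the vertex ordering as a separate issue.
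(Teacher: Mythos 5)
Your proposal is correct and takes essentially the same route as the paper: the paper also settles condition~\ref{i:CbarC} by combining Lemma~\ref{l:d:R} with \eqref{tnb}--\eqref{d:psi} (restrictions to the common boundary are spanned by the edge vectors of boundary edges), and verifies the multiplicities~\eqref{ak} by a direct linear-algebra computation, explaining them---exactly as you do---by the single inner edge when $k=4$ and the five inner edges, whose edge vectors satisfy one linear relation, when $k=5$. Your explicit formulation via kernels and fibres of the linear restriction maps, and the surjectivity/rank check for $R_{\partial\Delta^5}\to R_C$, is a more carefully spelled-out version of the same argument rather than a different one.
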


\begin{proof}
Let $C\subset \partial\Delta^5$ and~$\bar C$ be clusters of pentachora as in Subsection~\ref{ss:full}, i.e., the l.h.s.\ and r.h.s.\ of a Pachner move. Combining Lemma~\ref{l:d:R} and~\eqref{tnb}, we see that the permitted colorings of the common boundary $\partial C = \partial \bar C$ are generated, according to~\eqref{d:psi}, by the edges \emph{belonging to this boundary}. This means the item~\ref{i:CbarC} in Subsection~\ref{ss:full} is fulfilled.

As for the multiplicities~\eqref{ak}, they are checked by a direct calculation.
\end{proof}

\begin{remark}
Multiplicities~\eqref{ak} arise of course because there is a one-dimen\-sional space of colorings of \emph{only inner tetrahedra} (that is, the boundary tetrahedra are colored with zeros) if $C$ contains four pentachora, and a four-dimen\-sional space of colorings of only inner tetrahedra if $C$ contains five pentachora. The obvious reason why these colorings of only inner tetrahedra occur is that they are generated, according to~\eqref{psi}, by (nonzero) colorings of only \emph{inner edges}. There is one inner edge if $C$ contains four pentachora, and five inner edges if $C$ contains five pentachora. In the latter case, however, the five colorings are linearly dependent: it can be checked that \emph{their sum is the identical zero}, compare formula~\eqref{o:g} and Subsection~\ref{ss:mr} in Appendix~\ref{s:app}.
\end{remark}

\subsection{`Rough' invariant from the total number of colorings}

It follows from Theorem~\ref{th:g} that we can construct a PL manifold invariant from the total number of colorings as follows.

Recall (see the end of Subsection~\ref{ss:perm}) that~$R_C$ means the set of permitted colorings of a simplicial complex~$C$, and vertical bars mean the cardinality of a set. In the linear case, $R_C$ is obviously a linear space over~$F$.

\begin{theorem}\label{th:r}
The following quantity:
\begin{equation}\label{rough}
\frac{|R_M|}{|F|^{2N_0+N_4/2}},
\end{equation}
or its base~$|F|$ logarithm, which we call~$I_{\mathrm{rough}}$:
\begin{equation}\label{logrough}
I_{\mathrm{rough}}(M) = \dim_F R_M - 2N_0 - \frac{1}{2}N_4,
\end{equation}
is a PL manifold invariant, for a given finite field~$F$. Additionally, $I_{\mathrm{rough}}(M)$ depends only on the characteristic of~$F$.
\end{theorem}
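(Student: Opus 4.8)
The plan is to invoke the Pachner theorem~\cite{Pachner,Lickorish}, which guarantees that any two triangulations of a PL manifold are joined by a finite sequence of Pachner moves; it therefore suffices to show that the right-hand side of~\eqref{logrough} is unchanged by each of the five moves 1--5, 2--4, 3--3, 4--2, 5--1. For a move replacing a cluster $C$ of $k$ pentachora by its complementary cluster $\bar C$ of $6-k$ pentachora, I would track separately the increments $\Delta N_0$, $\Delta N_4$, and $\Delta(\dim_F R_M)$, and then check that the combination $\Delta(\dim_F R_M) - 2\,\Delta N_0 - \tfrac12\,\Delta N_4$ vanishes in every case.

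The purely combinatorial increments are immediate: $N_4$ changes by $(6-k)-k = 6-2k$, while $N_0$ increases by one only for the 1--5 move (the central vertex~$6$ is created), decreases by one for the 5--1 move, and stays fixed for 2--4, 3--3, 4--2. The essential step is to compute $\Delta(\dim_F R_M)$. Writing $M = M_0 \cup C$ and $M' = M_0 \cup \bar C$, where $M_0$ is the unchanged remainder glued along the common boundary $\partial C = \partial\bar C$, I would analyze the restriction map $\pi\colon R_M \to R_{\partial C}$. Its kernel splits as a direct sum of those colorings of $M_0$ that vanish on $\partial C$ and those colorings of the inner tetrahedra of $C$ that vanish on $\partial C$; by the full hexagon (Theorem~\ref{th:g}) the latter summand has dimension $\log_{|F|} a_k$. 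Its image consists of the boundary colorings that extend over $M_0$, a space depending only on $M_0$ and on $R_{\partial C}$, because item~\ref{i:CbarC} already guarantees that every element of $R_{\partial C}$ extends over $C$. Since $M_0$ and $\partial C$ are literally the same on both sides of the move, every contribution cancels except the multiplicity term, giving $\Delta(\dim_F R_M) = \log_{|F|} a_{6-k} - \log_{|F|} a_k$.

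Substituting the values~\eqref{ak}, namely $\log_{|F|} a_k = 0,0,0,1,4$ for $k=1,\dots,5$, I would verify the five identities by direct substitution: for 1--5 one has $4 - 2\cdot 1 - \tfrac12\cdot 4 = 0$, for 2--4 one has $1 - 0 - \tfrac12\cdot 2 = 0$, for 3--3 all three increments are zero, and the reverse moves 4--2 and 5--1 follow because reversing a move negates all three increments and hence the whole combination. This establishes that $I_{\mathrm{rough}}(M)$ is a PL invariant. I expect the main obstacle to lie precisely in the middle paragraph: one must argue cleanly that the ``outer'' contributions coming from $M_0$ are genuinely identical on the two sides of the move, so that the multiplicity term is isolated as the sole source of change in $\dim_F R_M$.

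For the final assertion, that $I_{\mathrm{rough}}(M)$ depends only on $\operatorname{char} F$, I would observe that $R_M$ is the kernel of a linear map whose matrix has \emph{integer} entries, being assembled from copies of the matrix $\mathcal R$ in~\eqref{y-any-char}, one for each pentachoron. Reducing this matrix modulo $p = \operatorname{char} F$ gives a matrix over the prime field $\mathbb F_p$ whose rank is unaffected by the field extension $\mathbb F_p \subset F$, since the rank of a matrix over a field is detected by vanishing of minors and is stable under extension of scalars. Hence $\dim_F R_M = 2N_3 - \operatorname{rank}$ is the same for all finite fields $F$ of a given characteristic, and so is $I_{\mathrm{rough}}(M)$.
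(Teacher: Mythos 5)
Your treatment of the Pachner-move invariance is essentially a more detailed version of the paper's (the paper merely says the count ``is not hard to see'' from~\eqref{ak}), and your argument for dependence only on $\operatorname{char} F$ coincides with the paper's rank-of-a-linear-system observation. One local inaccuracy: item~\ref{i:CbarC} does \emph{not} guarantee that every element of $R_{\partial C}$ extends over~$C$ (since $\partial C$ contains no pentachora, $R_{\partial C}$ is the space of \emph{all} boundary colorings, most of which do not extend). What item~\ref{i:CbarC} does give is that the set of boundary colorings extendable over~$C$ equals the set extendable over~$\bar C$; hence $\LinearImage\pi = \{\text{extendable over }M_0\}\cap\{\text{extendable over }C\} = \LinearImage\pi'$, which is all you need, so this is repairable.

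The genuine gap is elsewhere: you never address independence of the \emph{ordering of vertices}, and the paper spends the entire second half of its proof on exactly this. The space $R_M$ is not determined by the triangulation alone: the matrix $\mathcal R$ in~\eqref{y-any-char} is applied to each pentachoron $ijklm$ according to the increasing order of its vertices (Subsection~\ref{ss:o}), so $\dim_F R_M$ is a priori a function of the pair (triangulation, ordering). Your argument shows this quantity is preserved along a chain of Pachner moves, but such a chain transports the initial ordering (with new vertices inserted wherever one chooses during 1--5 moves); it ends at the target triangulation equipped with \emph{some} ordering, not necessarily the given one. To conclude that $I_{\mathrm{rough}}$ is a well-defined PL invariant one must still show that, for a fixed triangulation, the value is the same for all orderings. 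The paper does this with a specific trick enabled by the \emph{full} hexagon: to move a vertex $v$ to an arbitrary position in the ordering, perform a chain of moves removing $v$ (ending with a 5--1 move), then run the chain backwards, reinserting $v$ by the corresponding 1--5 move at any desired place in the order. Without this (or some substitute, e.g.\ a direct proof of symmetry of the relations~\eqref{y-any-char} under vertex permutations, which does not obviously hold), your proof is incomplete.
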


\begin{proof}
Indeed, it is not hard to see using~\eqref{ak}, and counting how many vertices and pentachora appear or disappear in a Pachner move, that \eqref{rough} does not change under any move. As for the second statement of the theorem, note that $\dim_F R_M$ is obtained from a system of \emph{linear} equations, so it remains the same under any field extension.

What remains is to show that \eqref{rough} does not depend on the \emph{order} of vertices. Indeed, here is how we can change the position of any one vertex~$v$ in this ordering. We do any chain of Pachner moves that removes~$v$ from the triangulation; this removal is of course performed by a move 5--1. Then we do all this chain backwards, but when doing the corresponding move 1--5, we \emph{change the position} of~$v$ in the order of vertices into any other we like. Such possibility is of course ensured by the fact that we have a \emph{full} hexagon.
\end{proof}

\section{Polynomial hexagon cohomologies}\label{s:cohomology}

\subsection{Polynomial hexagon chain complex}\label{ss:c}

Like in the case of quandles~\cite[Chapter~4]{CKS}, a \emph{cohomology theory} can be proposed based on our set theoretic hexagon relations. There are actually many versions of cohomologies; here we will be using the \emph{polynomial cohomologies} defined as explained below, and taking advantage of the \emph{linearity} of the conditions~\eqref{y-any-char} defining the permitted colorings.

Our polynomials will be \emph{homogeneous} of a fixed degree~$\kappa$, and over a \emph{prime field}~$\mathbb F_p$. We will see that the case $\kappa=3$, \ $p=2$ is already very interesting.

Our letters $x_t$ and~$y_t$ will mean, in this Section and in the next Section~\ref{s:actions}, simply \emph{variables} over~$\mathbb F_p$, corresponding to tetrahedron~$t$. That is, each time we speak about an algebraic construction related to a simplicial complex, we imply that everything happens in the \emph{ring of polynomials} of variables $x_t,y_t$, where $t$ runs over all tetrahedra in that complex. Only in Section~\ref{s:results} these variables will resume taking specific values in finite fields.

For each pentachoron~$u$, we consider the \emph{ideal\/}~$I_u$ generated by the five linear forms obtained as follows: for $u=12345$, subtract the r.h.s.\ of~\eqref{y-any-char} from the l.h.s., and take the entries of the resulting column; for any~$u$, make the substitution of vertices as described in Subsection~\ref{ss:o}.

\paragraph{Cochains:} $n$-cochains, $n\ge 3$, belong to the $n$-simplex~$\Delta^n$ with vertices $1,\ldots,n+1$. Let $\mathbf x_{\Delta^n}$ denote the set of all~$x_t$, \ $t\subset \Delta^n$, and  $\mathbf y_{\Delta^n}$ similarly the set of all~$y_t$. Also, let $I_{\Delta^n}$ denote the \emph{sum} of all ideals~$I_u$, \ $u\subset \Delta^n$.

By definition, $n$-cochains are those elements in the factor ring $\mathbb F_p[\mathbf x_{\Delta^n},\mathbf y_{\Delta^n}] / I_{\Delta^n}$ that arise from polynomials of degree~$\kappa$ in $\mathbb F_p[\mathbf x_{\Delta^n},\mathbf y_{\Delta^n}]$.

\begin{remark}
Simply speaking, $n$-cochains are polynomials of degree~$\kappa$ of any maximal linearly independent set of variables $x_t,y_t$. The dependencies are given here, of course, by the formulas of type~\eqref{y-any-char} for all pentachora~$u$.
\end{remark}

\paragraph{Coboundary operator:} let an $(n-1)$-cochain~$c$ be represented by a polynomial $P(\mathbf x_{\Delta^{n-1}},\mathbf y_{\Delta^{n-1}})$. Its coboundary $\delta^n c$ is, by definition, represented by the polynomial
\begin{equation}\label{cob}
\Bigl(\text{representative of }(\delta^n c)(\mathbf x_{\Delta^n},\mathbf y_{\Delta^n}) \Bigr)
= \sum_{i=1}^{n+1} (-1)^{(i-1)} P(\mathbf x_{\Delta_i^{n-1}},\mathbf y_{\Delta_i^{n-1}}).
\end{equation}

In~\eqref{cob}, $\Delta_i^{n-1}$ means the $(n-1)$-face lying \emph{opposite} vertex~$i$. This~$\Delta_i^{n-1}$ is identified with the standard~$\Delta^{n-1}$ (having vertices $1,\ldots,n$), together with all their $x_t$ and~$y_t$, in the spirit of Subsection~\ref{ss:o}. Namely, this identification follows from the identification of the vertices of~$\Delta_i^{n-1}$, taken in the increasing order of their numbers, with the vertices of~$\Delta^{n-1}$, taken also in the increasing order of their numbers.

The correctness of definition~\eqref{cob} can be easily checked.

The part of the cochain complex we will be dealing with here is as follows:
\begin{equation}\label{c}
C^3 \stackrel{\delta^4}{\longrightarrow} C^4 \stackrel{\delta^5}{\longrightarrow} C^5 .
\end{equation}
where $C^n$ is of course the group of $n$-cochains. More accurately, it should be written as $C^n=C_{p,\kappa}^n$, because, as we remember, it depends on the field characteristic~$p$ and the degree~$\kappa$ of the involved polynomials.

\subsection{Hexagon 4-cocycles, Pachner moves, and manifold invariants}\label{ss:f}

Here we explain the meaning of the fragment~\eqref{c}, and specifically of the group
\begin{equation}\label{H4}
H_{p,\kappa}^4=\Ker \delta^5 / \LinearImage \delta^4,
\end{equation}
for building invariants of four-dimen\-sional PL manifolds.

Operator~$\delta^5$ deals with the 5-simplex~$\Delta^5$. Recall (Subsection~\ref{ss:P}) that its boundary $\partial \Delta^5$ is the l.h.s.\ and r.h.s.\ together of any four-dimen\-sional Pachner move. If the l.h.s.\ of a Pachner move is part of a triangulation of an \emph{oriented} manifold~$M$, then there is a consistent orientation of all pentachora in this l.h.s., induced from~$M$; the same applies to the r.h.s.\ when it has replaced the l.h.s.

We now choose any finite field $F\supset \mathbb F_p$ of characteristic~$p$. Any polynomial 4-cochain~$c$, defined as in Subsection~\ref{ss:c}, determines then a function
\[
f_u\colon\quad (\text{permitted } F \text{-colorings of } u) \to F
\]
for each pentachoron~$u$ (`$F$-colorings' means of course that the colors take values in~$F$). We now define the `discrete action density'~$A_u$ (see Subsection~\ref{ss:phys} for the explanation of this terminology) as~$f_u$, taken with sign plus if the orientation of~$u$ in the manifold coincides with its orientation determined by the increasing order of its vertices, or taken with sign minus otherwise.

With such an agreement, the cocycle condition for~$c$ can be seen to imply
\begin{equation}\label{cocycle}
\sum_{u\subset \text{l.h.s.}} A_u = \sum_{u\subset \text{r.h.s.}} A_u ,
\end{equation}
where the l.h.s.\ and r.h.s.\ are those of the Pachner move.

\begin{theorem}\label{th:gen}
 \begin{enumerate}\itemsep 0pt
  \item\label{i:cocycle} Let $M$ be a triangulated, orientable, closed, four-dimen\-sional PL manifold. Let $c$ be a polynomial hexagon 4-cocycle, and~$A_u$ constructed as explained above. Then, in addition to~\eqref{logrough}, there are the following invariants:
\begin{equation}\label{inv-g}
\mathrm P_M(v)=\frac{\# v}{|R_M|},
\end{equation}
where $\# v$ is the multiplicity of value $v\in F$---the number of times the quantity
\begin{equation}\label{action}
\mathcal A = \sum_{u\subset M} A_u
\end{equation}
takes the value~$v$ when the coloring runs over all permitted colorings. The ordering of triangulation vertices used for calculating invariants~\eqref{inv-g} can be arbitrary.
  \item\label{i:coboundary} Invariants~\eqref{inv-g} depend only on the cohomology class (see~\eqref{H4}) of~$c$.
 \end{enumerate}
\end{theorem}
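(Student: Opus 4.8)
The plan is to prove the two parts of Theorem~\ref{th:gen} separately, in each case reducing the claim to invariance under a single Pachner move, since any two triangulations of~$M$ are connected by a finite sequence of such moves (and, by Theorem~\ref{th:r}, the denominator $|R_M|$ already transforms correctly under moves and is order-independent).

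\medskip

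For part~\ref{i:cocycle}, I would first fix a Pachner move and compare the multiplicity functions $\#v$ before and after. The key structural input is the full hexagon (Theorem~\ref{th:g}): every permitted coloring of $\partial C=\partial\bar C$ extends to exactly $a_k$ permitted colorings of the l.h.s.\ cluster~$C$ and to exactly $a_{6-k}$ colorings of the r.h.s.\ cluster~$\bar C$. Writing $\mathcal A=\sum_{u\subset M}A_u$ as the sum of a boundary-independent part (from pentachora outside the move) plus $\sum_{u\subset C}A_u$, and using the cocycle identity~\eqref{cocycle} to replace $\sum_{u\subset C}A_u$ by $\sum_{u\subset\bar C}A_u$ for \emph{each fixed} boundary coloring, I would show that the value $\mathcal A$ attached to a coloring is unchanged when we pass from $C$ to $\bar C$, up to the $a_k$-to-$a_{6-k}$ redistribution of the extension multiplicities. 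This means the histogram $\#v$ merely gets rescaled by the constant factor $a_k/a_{6-k}$ coming from the inner-tetrahedron freedom, together with the $|F|^{2\Delta N_0}$ factor from vertices added or removed; dividing by $|R_M|$, whose invariance already absorbs exactly these same factors by Theorem~\ref{th:r}, cancels them and leaves $\mathrm P_M(v)$ invariant. Here the one subtle point to verify is that the action density summed over the \emph{inner} extensions is constant across those extensions for a fixed boundary coloring, so that each boundary coloring contributes coherently to a single value~$v$; this follows because $A_u$ depends only on the colors of the 3-faces of~$u$, and the cocycle relation is an identity in the polynomial ring before any specialization.

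\medskip

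For the order-independence assertion in part~\ref{i:cocycle}, I would reuse the device from the proof of Theorem~\ref{th:r}: remove a vertex~$v$ by a $5$--$1$ move and reintroduce it by a $1$--$5$ move placed at a different position in the ordering. Invariance under each of these two moves has just been established, and the sign convention built into~$A_u$ via the orientation is exactly what guarantees that~\eqref{cocycle} is insensitive to relabeling vertices consistently; so the composite operation changes the vertex order without changing $\mathrm P_M(v)$.

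\medskip

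For part~\ref{i:coboundary}, I would show that replacing $c$ by $c+\delta^4 b$ for a $3$-cochain~$b$ leaves every invariant $\mathrm P_M(v)$ unchanged. The coboundary $\delta^4 b$ contributes to $\mathcal A$ a sum over all pentachora~$u\subset M$ of $(\delta^4 b)$ evaluated on~$u$, which by definition~\eqref{cob} is an alternating sum of the values of~$b$ on the $3$-faces (tetrahedra) of each~$u$. In a closed oriented manifold every tetrahedron is shared by exactly two pentachora inducing opposite orientations on it, so with the sign conventions of~$A_u$ these two contributions cancel termwise; hence the total change in $\mathcal A$ is identically zero \emph{for every permitted coloring}. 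Consequently the histogram $\#v$ is literally unchanged, and so is $\mathrm P_M(v)$.

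\medskip

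\textbf{Main obstacle.} I expect the genuinely delicate step to be the bookkeeping in part~\ref{i:cocycle}: carefully matching the extension multiplicities $a_k,a_{6-k}$ against the vertex and pentachoron counts so that every power of $|F|$ cancels in $\#v/|R_M|$, and checking that summing $A_u$ over the inner extensions does not spread a single boundary coloring's contribution across several values of~$v$. The coboundary-invariance of part~\ref{i:coboundary}, by contrast, reduces cleanly to the standard "each tetrahedron has two oppositely oriented neighbors" cancellation and should be routine once the orientation signs in the definition of~$A_u$ are pinned down.
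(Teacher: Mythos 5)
Your proposal is correct and takes essentially the same approach as the paper: part \ref{i:cocycle} by using the cocycle identity~\eqref{cocycle} together with the full hexagon to force $\#v$ and $|R_M|$ to acquire one and the same multiplier under any Pachner move (with vertex-order independence handled by the remove-and-reinsert device from the proof of Theorem~\ref{th:r}), and part \ref{i:coboundary} by the termwise cancellation of the two oppositely signed contributions that each tetrahedron makes through its two ambient pentachora. One harmless bookkeeping slip: the histogram rescales by exactly $a_{6-k}/a_k$, with no separate $|F|^{2\Delta N_0}$ factor, since the multiplicities $a_k$ already count all extensions to inner tetrahedra (including those incident to an added vertex) --- but as the same factor multiplies both $\#v$ and $|R_M|$, your conclusion stands.
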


Recall that $|R_M|$ is the number of all permitted $F$-colorings; $\mathrm P_M(v)$ may thus be called the \emph{probability} of value~$v$.

\begin{proof}
 \begin{itemize}\itemsep 0pt
  \item[\ref{i:cocycle}] Indeed, our cocycle property~\eqref{cocycle} and the full hexagon for~$R$~\eqref{y-any-char} ensure together that both $\# V$ and~$|R_M|$ can only acquire the same multiplier during a Pachner move. Independence of a vertex ordering is proved in the same way as in Theorem~\ref{th:r}.
  \item[\ref{i:coboundary}] Let $c$ be a coboundary of a cochain represented by a polynomial $P(\mathbf x_{\Delta^{n-1}}, \mathbf y_{\Delta^{n-1}})$, see~\eqref{cob}. Choose an arbitrary $(n-1)$-simplex~$\Delta_i^{n-1}$; it belongs to two pentachora. Then trace the signs with which the two corresponding terms $P(\mathbf x_{\Delta_i^{n-1}}, \mathbf y_{\Delta_i^{n-1}})$ enter in the action~\eqref{action}, using \eqref{cob} and the definition of~$A_u$ in this Subsection, and check that these signs are opposite.
 \end{itemize}
\end{proof}
    
\subsection{Hexagon 4-cocycles and manifold invariants in physical terms}\label{ss:phys}

In physical terms, the quantity~$\mathcal A$ in~\eqref{action} can be called (an analogue of) \emph{action}, whilst individual summands~$A_u$ represent a discrete analogue of action (or \emph{Lagrangian}) \emph{density}. Invariants~\eqref{inv-g} can be organized into a \emph{state-sum} form using homomorphisms
\begin{equation}\label{h}
\mathrm e\colon\;G\to \mathbb C^*
\end{equation}
of our abelian group into the \emph{multiplicative} group of complex numbers; any homomorphism~\eqref{h} can play the role of the exponential function in a `Feynman integral over all (physical) fields', which turns, in our case, into the sum
\[
\sum_{\substack{\text{all permitted}\\  \text{colorings}}}\mathrm e(\mathcal A).
\]

This is a standard idea (compare~\cite[Subsection~4.3.1]{CKS}) that we will not delve into. Instead, we just remark that our invariants can thus be given a form of a `discrete TQFT'---topological quantum field theory on piecewise linear four-manifolds.

\subsection{Two simple but useful lemmas}\label{ss:2lem}

\begin{lemma}\label{l:s4}
If a triangulated manifold~$M$ is PL homeomorphic to the sphere~$S^4$, and $c$ is a hexagon 4-cocycle, then action~\eqref{action} vanishes:
\[
\mathcal A=0\qquad \text{for any permitted coloring}.
\]
\end{lemma}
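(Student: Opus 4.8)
The plan is to exploit the fact that $S^4$ is the boundary of the standard $5$-simplex, so that the action over $S^4$ coincides, up to orientation bookkeeping, with the total coboundary $\delta^5$ applied to the cocycle $c$—which vanishes precisely because $c$ is a cocycle. First I would recall that any PL triangulation of $S^4$ can be reached from the boundary $\partial\Delta^5$ of the standard $5$-simplex by a sequence of Pachner moves (since $\partial\Delta^5$ is itself a triangulation of $S^4$, and all triangulations of a fixed PL manifold are Pachner-equivalent). Now observe that the quantity $\mathcal A=\sum_{u\subset M}A_u$ over a triangulation of a \emph{closed} $4$-manifold is invariant under Pachner moves on \emph{each fixed permitted coloring}: indeed, the cocycle identity in the form \eqref{cocycle} says exactly that the contribution of the l.h.s.\ cluster equals that of the r.h.s.\ cluster, and the full hexagon (Theorem~\ref{th:g}) guarantees that a permitted coloring of the boundary of the move extends compatibly across the move, so the colorings on the two sides match up on the shared boundary. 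Hence $\mathcal A$ is a genuine PL-invariant of the \emph{colored} manifold, and it suffices to evaluate it on the single triangulation $M=\partial\Delta^5$.

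For that base case I would argue as follows. The $4$-faces (pentachora) of $\partial\Delta^5$ are exactly the six faces $\Delta_i^5$ opposite each vertex $i$, which, consistently oriented as the boundary of $\Delta^5$, enter with alternating signs $(-1)^{i-1}$. By the construction of $A_u$ in Subsection~\ref{ss:f}—namely $A_u=\pm f_u$ according to whether the induced orientation agrees with the increasing-vertex orientation—the sum $\mathcal A=\sum_{u\subset\partial\Delta^5}A_u$ is, for any permitted coloring, precisely the alternating sum
\[
\sum_{i=1}^{6}(-1)^{i-1} c(\mathbf x_{\Delta_i^4},\mathbf y_{\Delta_i^4}),
\]
which is the value of the representing polynomial of $\delta^5 c$ evaluated on that coloring; compare \eqref{cob} with $n=5$. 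Since $c$ is a cocycle, $\delta^5 c=0$ as an element of $C^5$, i.e.\ its representative lies in the ideal $I_{\Delta^5}$. But a permitted coloring is by definition a zero of every generator of each $I_u$, hence of all of $I_{\Delta^5}$, so the representative of $\delta^5 c$ evaluates to $0$ on it. Therefore $\mathcal A=0$ on every permitted coloring of $\partial\Delta^5$, and by the Pachner-invariance above, on every permitted coloring of any triangulation of $S^4$.

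The main obstacle I anticipate is the careful orientation and sign matching: I must verify that the boundary-induced orientations of the six pentachora of $\partial\Delta^5$ translate, under the definition of $A_u$ (sign determined by the increasing order of vertices), into exactly the alternating coefficients $(-1)^{i-1}$ appearing in the coboundary formula \eqref{cob}, so that $\mathcal A$ really \emph{is} the coboundary evaluated on the coloring rather than merely resembling it. A secondary subtlety is confirming that the Pachner-invariance argument for a \emph{fixed} coloring is legitimate: the full hexagon gives a correspondence between colorings on the two sides of a move, and I would need to note that because the shared boundary coloring is held fixed, the two actions agree summand-by-summand on the parts away from the move while \eqref{cocycle} handles the changed cluster. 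Once these sign and matching points are pinned down, the rest is the immediate observation that permitted colorings annihilate the cocycle's ideal representative.
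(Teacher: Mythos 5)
Your proposal is correct and follows essentially the same route as the paper's proof: reduce to $\partial\Delta^5$ by Pachner moves, with $\mathcal A$ unchanged along the way thanks to the cocycle condition~\eqref{cocycle}, and then observe that $\mathcal A$ vanishes on $\partial\Delta^5$ because $c$ is a cocycle. Your base case merely unfolds what the paper cites implicitly---identifying $\mathcal A(\partial\Delta^5)$ with the evaluation of the representative of $\delta^5 c$, which lies in $I_{\Delta^5}$ and hence vanishes on permitted colorings---and your attention to the orientation signs and to the coloring transfer across a move (via the full hexagon) fills in details the paper leaves to the reader.
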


\begin{proof}
Indeed, Pachner theorem says that the triangulation of~$M$ can be transformed into one isomorphic to~$\partial \Delta^5$ by a sequence of Pachner moves, and $\mathcal A$ does not change under any move, due to the cocycle condition~\eqref{cocycle}. Then, the same condition asserts that $\mathcal A=0$ for~$\partial \Delta^5$.
\end{proof}

\begin{lemma}\label{l:b4}
If a triangulated manifold~$M$ with boundary~$\partial M$ is PL homeomorphic to the ball~$B^4$, and $c$ is a hexagon 4-cocycle, then the value of action~\eqref{action}, for a given permitted $F$-coloring of~$M$, is determined uniquely by the coloring of the \emph{boundary}~$\partial M$.
\end{lemma}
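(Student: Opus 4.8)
The statement concerns a \emph{single fixed} triangulation of $M$ together with a fixed ordering of its vertices, so Pachner moves (which change the triangulation) are not directly available; the content is precisely that two permitted colorings of this one triangulation that agree on $\partial M$ produce the same action. Writing $\mathcal A_M(\chi)$ for the action~\eqref{action} evaluated on a permitted coloring $\chi$, and letting $\chi,\chi'$ be two such colorings with the same restriction to $\partial M$, the goal is $\mathcal A_M(\chi)=\mathcal A_M(\chi')$. The plan is to deduce this from the already-established vanishing on $S^4$ (Lemma~\ref{l:s4}) by closing $M$ up into a sphere.

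Concretely, I would form the \emph{double} $N=M\cup_{\partial M}\bar M$, where $\bar M$ is a second copy of $M$ carrying the reversed orientation, glued to $M$ by the identity map of $\partial M$. Since $M$ is PL homeomorphic to $B^4$, its double $N$ is PL homeomorphic to $S^4$, and the two copies supply a triangulation of $N$ in which every pentachoron lies entirely in $M$ or entirely in $\bar M$. I would then color $N$ by $\Xi$, equal to $\chi$ on $M$ and to $\chi'$ on $\bar M$; because $\chi$ and $\chi'$ restrict to the \emph{same} coloring of the gluing locus $\partial M$, the coloring $\Xi$ is well defined, and, permittedness being a condition on individual pentachora only, it is permitted. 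Lemma~\ref{l:s4} then gives $\mathcal A_N(\Xi)=0$.

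It remains to split $\mathcal A_N(\Xi)=\sum_{u\subset M}A_u+\sum_{u\subset\bar M}A_u$ and to identify the two sums. For the first, I would choose the vertex ordering of $N$ to restrict to the given ordering of $M$, so that each $u\subset M$ keeps both its orientation and its induced vertex order; hence this sum equals $\mathcal A_M(\chi)$. For the second, I would place each new interior vertex of $\bar M$ immediately after its counterpart in $M$; a short check then shows that every pentachoron of $\bar M$ inherits exactly the same relative vertex order as the corresponding pentachoron of $M$, so its function value $f_u$ is unchanged, while its orientation in $N$ is \emph{opposite} to that of its counterpart. Each term of the second sum therefore carries the opposite sign, and the sum equals $-\mathcal A_M(\chi')$. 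Combining, $0=\mathcal A_N(\Xi)=\mathcal A_M(\chi)-\mathcal A_M(\chi')$, which is the assertion.

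The routine parts (that $N$ is a PL $S^4$ and that $\Xi$ is permitted) are immediate. The one genuinely delicate point---the main obstacle---is the orientation and ordering bookkeeping in the second sum: one must verify that the vertices of $N$ can be ordered so that the reversed-orientation copy contributes exactly $-\mathcal A_M(\chi')$, rather than some reshuffled action. This amounts to checking that reversing the manifold orientation of a pentachoron while preserving its induced vertex order exactly flips the sign attached to $f_u$ in the definition of $A_u$, without disturbing $f_u$ itself, which is what the ``immediately after its counterpart'' choice of ordering is designed to guarantee.
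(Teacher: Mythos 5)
Your proof is correct and takes essentially the same route as the paper: glue an orientation-reversed copy $\bar M$ to $M$ along $\partial M$ to obtain a PL sphere, apply Lemma~\ref{l:s4}, and conclude by cancellation. The only (minor) difference is that the paper sidesteps your delicate ordering/orientation bookkeeping—identifying $\sum_{u\subset\bar M}A_u$ with $-\mathcal A_M(\chi')$—by instead holding the coloring of $\bar M$ fixed while varying only the inner coloring of $M$, so that the unknown contribution $\mathcal A_{\bar M}$ cancels between the two instances of $\mathcal A_M+\mathcal A_{\bar M}=0$.
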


\begin{proof}
Let $\bar M$ be a copy of~$M$, but with the opposite orientation. Glue $\bar M$ to~$M$ by identifying $\partial M$ and~$\partial \bar M$ in the natural way, so that $M$ and~$\bar M$ form together a (manifold PL homeomorphic to) sphere~$S^4$. According to Lemma~\ref{l:s4}, $\mathcal A=\mathcal A_M + \mathcal A_{\bar M}=0$ for the resulting manifold, so $\mathcal A_M$ clearly cannot change if only the coloring of \emph{inner} tetrahedra in~$M$ is changed.
\end{proof}

\section[Action densities of degrees $\kappa \le 6$ in characteristics $p=2$, $3$ and~$5$]{Action densities of degrees $\boldsymbol{\kappa \le 6}$ in characteristics $\boldsymbol{p=2}$, $\boldsymbol{3}$ and~$\boldsymbol{5}$}\label{s:actions}

We present here the results of computer calculations---explicit formulas for 4-cocycles representing \emph{bases} of all cohomology spaces~$H_{p,\kappa}^4$~\eqref{H4} for $p=2$, $3$ and~$5$, and $\kappa\le 6$. The notations we will be using are as follows. Let $u=ijklm$, \ $i<j<k<l<m$, be a triangulation pentachoron whose orientation induced from the manifold coincides with that determined by the order of its vertices. Let $x_t$ be the half of variables on the 3-faces of~$u$ (the other half is determined, as we remember, from the relations of type~\eqref{y-any-char}). We denote below, for the ease of perception,
\begin{equation*}
a=x_{jklm},\quad b=x_{iklm},\quad c=x_{ijlm},\quad d=x_{ijkm},\quad e=x_{ijkl}
\end{equation*}
(compare the order of variables in~\eqref{y-any-char}).

\subsection[Characteristic $p=2$]{Characteristic $\boldsymbol{p=2}$}\label{ss:p=2}

One important note about $p=2$ is that the orientations play actually no role here. Whatever was said above about the orientations, can be ignored for $p=2$; in particular, the arising invariants work also for \emph{non-orientable} manifolds~$M$, in contrast with other characteristics~$p$.

\subsubsection*{Degree $\boldsymbol{\kappa=1}$}

There are no nontrivial cocycles for this~$\kappa$.

\subsubsection*{Degree $\boldsymbol{\kappa=2}$}

The cohomology space $H_{2,2}^4$ is one-dimen\-sional, and its basis is represented by the following cocycle:
\begin{align}
c_1 = de+ce+ae+cd+bd+c^2+bc+ac+ab. \label{2,2}
\end{align}

\subsubsection*{Degree $\boldsymbol{\kappa=3}$}

The cohomology space $H_{2,3}^4$ is two-dimen\-sional, and its basis is represented by the following cocycles:
\begin{align}
c_1 & = bde+bce+ace+acd+abd, \label{2,3a} \\
c_2 & = de^2+ce^2+ae^2+c^2d+b^2d+c^3+ac^2+b^2c+ab^2. \label{2,3b}
\end{align}

\subsubsection*{Degree $\boldsymbol{\kappa=4}$}

The cohomology space $H_{2,4}^4$ is three-dimen\-sional, and its basis is represented by the following cocycles:
\begin{align*}
c_1 & = ce^3+be^3+bde^2+ace^2+b^2e^2+c^3e+b^2ce+ac^2d+ab^2d, \\
c_2 & = bce^2+b^2e^2+b^2de+ac^2e+b^3e+bd^3+ad^3+abd^2+a^2d^2+a^2cd+b^3d, \\
c_3 & = d^2e^2+c^2e^2+a^2e^2+c^2d^2+b^2d^2+c^4+b^2c^2+a^2c^2+a^2b^2.
\end{align*}
Note that $c_3$ here is obtained by applying the Frobenius endomorphism to the cocycle~\eqref{2,2}. 

\subsubsection*{Degree $\boldsymbol{\kappa=5}$}

The cohomology space $H_{2,5}^4$ is again three-dimen\-sional, and its basis is represented by the following cocycles:
\begin{align*}
c_1 & = de^4+ce^4+ae^4+c^4d+b^4d+c^5+ac^4+b^4c+ab^4, \\
c_2 & = ce^4+be^4+c^2e^3+b^2e^3+b^2de^2+c^3e^2+ac^2e^2+b^2ce^2+b^3e^2+c^4e+b^4e \\
 & +bd^4+ad^4+b^2d^3+a^2d^3+b^3d^2+ab^2d^2+a^2c^2d+b^4d, \\
c_3 & = bd^2e^2+b^2de^2+bc^2e^2+ac^2e^2+b^2ce^2+a^2ce^2+b^2d^2e+b^2c^2e+a^2c^2e \\
 & +ac^2d^2+a^2cd^2+ab^2d^2+a^2bd^2+a^2c^2d+a^2b^2d.
\end{align*}

\subsubsection*{Degree $\boldsymbol{\kappa=6}$}

The cohomology space $H_{2,6}^4$ is four-dimen\-sional, and its basis is represented by the following cocycles:
\begin{align*}
c_1 & = ce^5+be^5+bde^4+c^2e^4+ace^4+c^4e^2+b^4e^2+c^5e+b^4ce+ac^4d+ab^4d, \\
c_2 & = b^2d^2e^2+b^2c^2e^2+a^2c^2e^2+a^2c^2d^2+a^2b^2d^2, \\
c_3 & = d^2e^4+c^2e^4+a^2e^4+c^4d^2+b^4d^2+c^6+a^2c^4+b^4c^2+a^2b^4, \\
c_4 & = bce^4+b^2e^4+b^4de+ac^4e+b^5e+bd^5+ad^5+b^2d^4+abd^4+b^4d^2 \\
 & +a^4d^2+a^4cd+b^5d.
\end{align*}
Note that $c_2$ and~$c_3$ here are obtained by applying the Frobenius endomorphism to the cocycles \eqref{2,3a} and~\eqref{2,3b}. 

\subsection[Characteristic $p=3$]{Characteristic $\boldsymbol{p=3}$}

\subsubsection*{Degrees $\boldsymbol{\kappa=1}$ and~$\boldsymbol{3}$}

There are no nontrivial cocycles for these~$\kappa$.

\subsubsection*{Degree $\boldsymbol{\kappa=2}$}

The cohomology space $H_{3,2}^4$ is one-dimen\-sional, and its basis is represented by the following cocycle:
\begin{align}\label{3,2}
c_1 & = e^2+2d^2+2bd+ad+c^2+bc+2ac.
\end{align}

\subsubsection*{Degree $\boldsymbol{\kappa=4}$}

The cohomology space $H_{3,4}^4$ is two-dimen\-sional, and its basis is represented by the following cocycles:
\begin{align*}
c_1 & = e^4+de^3+ce^3+be^3+ae^3+2c^3e+b^3e+2c^3d\\
 & +b^3d+2c^4+2bc^3+2ac^3+b^3c+b^4+ab^3, \\
c_2 & = de^3+ce^3+2be^3+2ae^3+d^2e^2+bce^2+2b^2e^2+2a^2e^2+bd^2e+c^2de\\
 & +bcde+acde+2abde+bc^2e+b^2ce+2abce+b^3e+2ab^2e+2a^2be+2c^2d^2\\
 & +2bcd^2+2acd^2+2b^2d^2+abd^2+c^3d+2bc^2d+2b^2cd+2abcd+2a^2cd+b^3d\\
 & +ab^2d+a^2bd+2c^4+2bc^3+ac^3+2a^2c^2+b^3c+ab^2c+a^2bc+b^4+ab^3.
\end{align*}

\subsubsection*{Degree $\boldsymbol{\kappa=5}$}

The cohomology space $H_{3,5}^4$ is one-dimen\-sional, and its basis is represented by the following cocycle:
\begin{align*}
c_1 & = de^4+2ae^4+d^2e^3+2bde^3+c^2e^3+2b^2e^3+abe^3+2a^2e^3\\
 & +c^3de+2b^3de+c^4e+2bc^3e+2b^3ce+b^4e+2c^3d^2+b^3d^2\\
 & +c^4d+bc^3d+2ac^3d+2b^3cd+2b^4d+ab^3d+c^5+bc^4\\
 & +2ac^4+abc^3+a^2c^3+2b^3c^2+2b^4c+ab^3c+2ab^4+2a^2b^3.
\end{align*}

\subsubsection*{Degree $\boldsymbol{\kappa=6}$}

The cohomology space $H_{3,6}^4$ is two-dimen\-sional, and its basis is represented by the following cocycles:
\begin{align*}
c_1 & = e^6+2d^6+2b^3d^3+a^3d^3+c^6+b^3c^3+2a^3c^3, \\
c_2 & = c^2e^4+bce^4+b^2e^4+cd^2e^3+2bd^2e^3+2ad^2e^3+2c^2de^3+2bcde^3+2acde^3\\
 & +b^2de^3+abde^3+a^2de^3+bc^2e^3+2ac^2e^3+2b^2ce^3+abce^3+a^2ce^3+ab^2e^3\\
 & +2a^2be^3+d^4e^2+c^4e^2+bc^3e^2+b^3ce^2+2a^4e^2+2d^5e+bd^4e+2c^2d^3e\\
 & +2bcd^3e+2acd^3e+b^2d^3e+abd^3e+a^2d^3e+c^3d^2e+2b^3d^2e+2a^3d^2e+c^4de\\
 & +2bc^3de+2ac^3de+2b^3cde+2a^3cde+ab^3de+a^3bde+c^5e+bc^4e+2b^2c^3e\\
 & +abc^3e+a^2c^3e+b^3c^2e+2a^3c^2e+b^4ce+ab^3ce+a^3bce+2b^5e+2ab^4e\\
 & +2a^2b^3e+a^3b^2e+2a^4be+d^6+cd^5+2ad^5+2c^2d^4+2bcd^4+2acd^4+b^2d^4\\
 & +abd^4+a^2d^4+c^3d^3+bc^2d^3+b^2cd^3+abcd^3+2a^2cd^3+b^3d^3+2ab^2d^3+c^4d^2\\
 & +bc^3d^2+2ac^3d^2+b^3cd^2+2a^3cd^2+2b^4d^2+2ab^3d^2+2a^3bd^2+2a^4d^2+2bc^4d\\
 & +b^2c^3d+abc^3d+2a^2c^3d+b^3c^2d+2b^4cd+ab^3cd+a^3bcd+2a^4cd+b^5d\\
 & +ab^4d+2a^3b^2d+a^4bd+c^6+2bc^5+2ac^5+2ab^2c^3+a^3c^3+b^4c^2+ab^3c^2\\
 & +a^3bc^2+ab^4c+2a^3b^2c+a^4bc+a^2b^4+a^3b^3.
\end{align*}
Here $c_1$ is obtained by applying the Frobenius endomorphism to the cocycle~\eqref{3,2}.

\subsection[Characteristic $p=5$]{Characteristic $\boldsymbol{p=5}$}

\subsubsection*{Degrees $\boldsymbol{\kappa=1}$ and $\boldsymbol{3\le\kappa\le 5}$}

There are no nontrivial cocycles for these~$\kappa$.

\subsubsection*{Degree $\boldsymbol{\kappa=2}$}

The cohomology space $H_{5,2}^4$ is one-dimen\-sional, and its basis is represented by the following cocycle:
\begin{align}\label{5,2}
c_1 & = e^2+ce+4be+d^2+3bd+2ad+c^2+3bc+3ac+2b^2.
\end{align}

\subsubsection*{Degree $\boldsymbol{\kappa=6}$}

The cohomology space $H_{5,6}^4$ is two-dimen\-sional, and its basis is represented by the following cocycles:
\begin{align*}
c_1 & = e^6+2de^5+2ce^5+be^5+2ae^5+4c^5e+b^5e+3c^5d+2b^5d+3c^6+4bc^5\\
 & +3ac^5+2b^5c+b^6+2ab^5, \\
c_2 & = de^5+3ce^5+2be^5+4ae^5+c^2e^4+3bce^4+b^2e^4+2cd^2e^3+3bd^2e^3+ad^2e^3\\
 & +3c^2de^3+4bcde^3+3acde^3+4b^2de^3+2abde^3+4a^2de^3+3c^3e^3+2bc^2e^3\\
 & +ac^2e^3+3b^2ce^3+abce^3+2a^2ce^3+2ab^2e^3+3a^2be^3+a^3e^3+2cd^3e^2+3bd^3e^2\\
 & +ad^3e^2+c^2d^2e^2+bcd^2e^2+acd^2e^2+4b^2d^2e^2+3abd^2e^2+3c^3de^2+3bc^2de^2\\
 & +2ac^2de^2+3abcde^2+3a^2cde^2+b^3de^2+3a^2bde^2+4c^4e^2+4ac^3e^2+4a^2c^2e^2\\
 & +2b^3ce^2+ab^2ce^2+3a^2bce^2+3a^3ce^2+b^4e^2+4ab^3e^2+2a^2b^2e^2+a^3be^2\\
 & +3a^4e^2+d^5e+3cd^4e+2bd^4e+4ad^4e+2bcd^3e+4b^2d^3e+abd^3e+3a^2d^3e\\
 & +3c^3d^2e+2bc^2d^2e+2ac^2d^2e+b^2cd^2e+2abcd^2e+2b^3d^2e+3ab^2d^2e+4a^3d^2e\\
 & +2c^4de+4ac^3de+4b^2c^2de+2a^2c^2de+4b^3cde+4ab^2cde+4a^2bcde+2a^3cde\\
 & +3b^4de+2ab^3de+2a^2b^2de+3a^3bde+3a^4de+4c^5e+4bc^4e+2b^2c^3e\\
 & +4abc^3e+2a^2c^3e+4b^3c^2e+4ab^3ce+a^3bce+3b^5e+4a^2b^3e+3a^4be+4d^6\\
 & +4cd^5+4bd^5+2ad^5+2c^2d^4+2acd^4+4b^2d^4+2a^2d^4+3c^3d^3+bc^2d^3\\
 & +2ac^2d^3+2b^2cd^3+abcd^3+2a^2cd^3+3b^3d^3+ab^2d^3+2a^2bd^3+3a^3d^3+4c^4d^2\\
 & +3bc^3d^2+3ac^3d^2+b^2c^2d^2+2abc^2d^2+2b^3cd^2+ab^2cd^2+3a^3cd^2+b^4d^2\\
 & +ab^3d^2+4a^3bd^2+c^5d+bc^4d+3b^2c^3d+2abc^3d+a^2c^3d+b^3c^2d+2ab^2c^2d\\
 & +3a^2bc^2d+a^3c^2d+ab^3cd+3a^2b^2cd+4a^3bcd+2a^4cd+2b^5d+a^2b^3d\\
 & +2a^4bd+2c^6+4ac^5+ab^2c^3+a^2bc^3+3a^3c^3+3ab^3c^2+3a^2b^2c^2+2a^3bc^2\\
 & +2a^4c^2+3b^5c+2a^2b^3c+4a^4bc+4b^6+2ab^5.
\end{align*}

\subsection[Remark on characteristic $0$]{Remark on characteristic $\boldsymbol{0}$}

The same calculation as above in this Section can well be done also in characteristic~$0$. The only nontrivial hexagon cocycle discovered this way is in degree~$2$, namely
\begin{align}\label{0,2}
4e^2-6ce+6be-d^2+2bd-2ad+4c^2-8bc+2ac+3b^2
\end{align}
Note that cocycles \eqref{3,2} (characteristic~$p=3$) and \eqref{5,2} ($p=5$) are just reductions modulo~$p$ of~\eqref{0,2} (up to a possible sign), while \eqref{2,2} ($p=2$) is not!

\section{Calculation of invariants: one more important theorem, and first results for specific manifolds}\label{s:results}

Given the values of all variables~$x_t$, all~$y_t$ are of course determined uniquely by~\eqref{y-any-char}. This allows us to identify the linear space~$R_M$ of all permitted colorings with the linear space $L\subset F^{N_3}$ given as follows. First, $F^{N_3}$ here consists of all $N_3$-tuples of (only) variables~$x_t$. Second, on each tetrahedron~$t$, there are \emph{two} values~$y_t$ given by~\eqref{y-any-char}, because we are considering a closed manifold~$M$ where $t$ belongs to two pentachora. By definition, the tuple of~$x_t$ is in~$L$ provided these two~$y_t$ coincide for each~$t$.

\subsection{A theorem that greatly simplifies the calculations}\label{ss:gs}

The multiplicities~$\# v$ in~\eqref{inv-g} turn out to be too huge even for a computer. Happily, our linear subspace~$L$ can be \emph{factored}: there exists a linear subspace $W\subset L$ such that the value~$v$ remains the same along any equivalence class in the factor~$L/W$. As $L/W$ is usually much smaller than~$L$, this makes our calculations much more feasible.

The subspace~$W$ is nothing but the space of all colorings generated by edges, according to \eqref{tnb} and~\eqref{d:psi}, and expressed in terms of only variables~$x_t$. That is, we take the first components of 2-columns~$\psi_{t,b}$. Denote these as~$\xi_{t,b}$, then, \eqref{tnb} and~\eqref{d:psi} say that
\[
\xi_{t,b}=0 \text{ \ if \ } b\not\subset t,
\]
and if $b$ does belong to~$t$, the numbers~$\xi_{t,b}$ are as follows:
\[
\begin{array}{c|cccccc} b & \mbox{\small 12} & \mbox{\small 13} & \mbox{\small 14} & \mbox{\small 23} & \mbox{\small 24} & \mbox{\small 34} \\ \hline \xi_{t,b} \strutik & -1 & 2 & -1 & -1 & 0 & 1 \end{array} \qquad \text{for \ } t=1234;
\]
for other tetrahedra~$t$, the necessary substitution must be made, in the style of Subsection~\ref{ss:o}.

\begin{theorem}\label{th:dual}
Let $F^{N_3}$ be, like in the beginning of this Section, the linear space of all sets of variables~$x_t\in F$. For each edge~$b$, consider the vector\/~$\Xi_b$ whose $t$-component is $\xi_{t,b}$, and let $W\subset F^{N_3}$ be the subspace spanned by all\/~$\Xi_b$. Then,
\begin{enumerate}\itemsep 0pt
 \item\label{i:WL} $W\subset L$. In other words, if we add any vector\/~$\Xi_b$ to a vector corresponding to a permitted coloring, we still get a permitted coloring,
 \item\label{i:LW} moreover, adding any\/~$\Xi_b$ to a permitted coloring does not change the action.
\end{enumerate}
\end{theorem}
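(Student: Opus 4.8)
The plan is to treat the two items separately: item~\ref{i:WL} will follow almost immediately from Theorem~\ref{th:d:R}, while item~\ref{i:LW} will be reduced to the ``ball'' Lemma~\ref{l:b4} by exploiting the fact that $\Xi_b$ is localized at the edge~$b$.

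For item~\ref{i:WL} I would show that each generator $\Xi_b$ already lies in~$L$. Regard the single edge~$b$ (with coefficient~$1$, all other edges carrying~$0$) as an $F$-coloring of edges. By Theorem~\ref{th:d:R} its image $\psi(b)\in\LinearImage\psi\subset R_M$ is a genuine permitted coloring of the whole~$M$. Under the identification of $R_M$ with~$L$ recalled at the start of this Section---a permitted coloring is sent to the tuple of its $x$-components---the coloring $\psi(b)$ is sent precisely to~$\Xi_b$, since by construction $\xi_{t,b}$ is the upper (the $x$-) entry of the $2$-column $\psi_{t,b}$ of \eqref{xy} and~\eqref{d:psi}. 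Hence $\Xi_b\in L$ for every edge~$b$; concretely, the two values of~$y_t$ produced on each tetrahedron~$t$ by its two adjacent pentachora agree for~$\Xi_b$ because they already agree for the permitted coloring~$\psi(b)$. As $L$ is a linear subspace, $W=\operatorname{span}\{\Xi_b\}\subset L$, which is exactly the ``adding'' reformulation stated in item~\ref{i:WL}.

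For item~\ref{i:LW} the key observation I would use is that $\Xi_b$ (equivalently $\psi(b)$) is \emph{supported at}~$b$: by~\eqref{tnb} its $t$-component vanishes whenever $t\not\supset b$. I would therefore pass to the closed star of~$b$ in~$M$. Since $M$ is a closed PL $4$-manifold, the link of the edge~$b$ is a PL $2$-sphere, so the closed star, being the join of~$b$ with this $S^2$, is PL homeomorphic to~$B^4$; moreover its boundary consists exactly of the tetrahedra of the star that do \emph{not} contain~$b$ (the faces $i*\sigma$ and $j*\sigma$ for $b=ij$ and $\sigma$ a triangle of the link), whereas the tetrahedra containing~$b$ are interior to the star. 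Then I would split the action~\eqref{action}: the pentachora of~$M$ fall into those containing~$b$---precisely the pentachora of the closed star---and those not containing~$b$, giving $\mathcal A=\mathcal A_{\mathrm{star}}+\mathcal A_{\mathrm{rest}}$. A pentachoron $u\not\supset b$ contains no tetrahedron that meets~$b$, so its coloring and hence its $A_u$ are unchanged by adding~$\Xi_b$, whence $\mathcal A_{\mathrm{rest}}$ is unaffected. On the star, $\Xi_b$ is supported only on interior tetrahedra and vanishes on the boundary, so the boundary coloring is untouched; Lemma~\ref{l:b4} then says $\mathcal A_{\mathrm{star}}$ depends only on that boundary coloring and so is also unchanged. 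Combining the two parts, $\mathcal A$ is invariant under adding any~$\Xi_b$, which is item~\ref{i:LW}.

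I expect the only genuine content to be this localization-and-ball argument for item~\ref{i:LW}; the main point to get right is that the closed star of~$b$ really is a PL $B^4$ with $\Xi_b$ supported strictly in its interior, after which Lemma~\ref{l:b4} does all the work. Item~\ref{i:WL} is essentially a restatement of Theorem~\ref{th:d:R}. The orientation signs needed when $p\neq2$ cause no trouble: the star inherits a consistent orientation from~$M$, exactly as in the doubling construction used for Lemmas~\ref{l:s4} and~\ref{l:b4}.
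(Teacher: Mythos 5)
Your proposal is correct and follows essentially the same route as the paper: item~\ref{i:WL} is obtained as a restatement of Theorem~\ref{th:d:R} via the identification of $R_M$ with~$L$, and item~\ref{i:LW} is proved by passing to the star of the edge~$b$, observing that $\Xi_b$ leaves the star's boundary coloring untouched, and invoking Lemma~\ref{l:b4}. Your write-up merely supplies details the paper leaves implicit (that the closed star is a PL ball and that pentachora outside the star contribute unchanged terms to the action), which is a faithful elaboration rather than a different argument.
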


\begin{proof}
\begin{itemize}\itemsep 0pt
 \item[\ref{i:WL}] This is just a rephrasing of Theorem~\ref{th:d:R}.
 \item[\ref{i:LW}] Take an arbitrary edge~$b$, and consider its \emph{star} within our given triangulation. Note that adding~$\Xi_b$ does not affect the coloring of the boundary of this star. It follows then from Lemma~\ref{l:b4} that the action stays indeed the same.
\end{itemize}
\end{proof}

\subsection[Calculation results for eight orientable and four unorientable manifolds using cocycle~\eqref{2,3a}]{Calculation results for eight orientable and four unorientable manifolds using cocycle~\textbf{(\ref{2,3a})}}\label{ss:fr}

We present here the manifold invariants arising from just one polynomial cocycle, namely~\eqref{2,3a}. We calculate the `rough' invariant~\eqref{logrough}, and then `probabilities'~\eqref{inv-g} for the four smallest fields of characteristic~2. As we said already in the beginning of Subsection~\ref{ss:p=2}, in characteristic~2 we can work with unorientable manifolds as well as orientable.

Below our notations are as usual: $S^n$ is an $n$-dimen\-sional sphere, $T^n$ is an $n$-dimen\-sional torus, $\mathbb RP^n$ is an $n$-dimen\-sional real projective space, $\mathbb CP^2$ is a \emph{complex} two-dimen\-sional projective space, and $S^2\tildetimes S^2$ denotes the \emph{twisted} product of two spheres~$S^2$.

As the reader can see, there are the same twelve manifolds in each table below in this Subsection, of which the first eight are orientable, and the remaining four unorientable.

\subsubsection*{`Rough' invariant~\textbf{(\ref{logrough})}---based on the total number of permitted colorings}
\[
\begin{array}{c|c}
M & I_{\text{rough}}(M) \\ \hline
S^4 \strutik & -6 \\ \hline
T^4 \strutik & 6 \\ \hline
S^2\times S^2 \strutik & -10 \\ \hline
S^2\tildetimes S^2 \strutik & -10
\end{array}
 \qquad
\begin{array}{c|c}
M & I_{\text{rough}}(M) \\ \hline
\mathbb CP^2 \strutik & -8 \\ \hline
S^2\times T^2 \strutik & 2 \\ \hline
S^3\times S^1 \strutik & 0 \\ \hline
\mathbb RP^3\times S^1 \strutik & 2
\end{array}
 \qquad
\begin{array}{c|c}
M & I_{\text{rough}}(M) \\ \hline
\mathbb RP^2\times S^2 \strutik & -4 \\ \hline
\mathbb RP^2\times T^2 \strutik & 4 \\ \hline
\mathbb RP^2\times \mathbb RP^2 \strutik & 0 \\ \hline
\mathbb RP^4 \strutik & -2
\end{array}
\]

\subsubsection*{`Refined' invariants~\textbf{(\ref{inv-g})}}
\begin{align*}
\text{For }F=\mathbb F_2\colon\qquad\qquad  &
       \begin{array}{c|c|c}
M             
& \mathrm P_M(0) & \mathrm P_M(1)  \\ \hline
S^4                \strutik           
&     1          &     0           \\ \hline
T^4  \strutik
&     1          &     0           \\ \hline
S^2\times S^2      \strutik 
&     1          &     0           \\ \hline
S^2\tildetimes S^2 \strutik
&    1/2         &    1/2          \\ \hline
\mathbb CP^2       \strutik 
&    1/2         &    1/2          \\ \hline
S^2\times T^2      \strutik 
&     1          &     0           \\ \hline
S^3\times S^1      \strutik 
&     1          &     0           \\ \hline
\mathbb RP^3\times S^1  \strutik 
&     1          &     0           \\ \hline
\mathbb RP^2\times S^2  \strutik 
&    3/4         &    1/4          \\ \hline
\mathbb RP^2\times T^2  \strutik 
&    9/16        &    7/16         \\ \hline
\mathbb RP^2\times \mathbb RP^2  \strutik 
&    1/2         &    1/2          \\ \hline
\mathbb RP^4  \strutik 
&    3/4         &    1/4    
       \end{array}
\end{align*}
\bigskip
\begin{align*}
\text{For }F=\mathbb F_4\colon\qquad\qquad  &
       \begin{array}{c|c|c|c}
M             
& \mathrm P_M(0) & \mathrm P_M(1) & \mathrm P_M(\text{any other value}) \\ \hline
S^4                \strutik           
&     1          &     0          &                     0               \\ \hline
T^4  \strutik
&   65/128       &   63/128       &                     0               \\ \hline
S^2\times S^2      \strutik 
&    5/8         &    3/8         &                     0               \\ \hline
S^2\tildetimes S^2 \strutik
&    5/8         &    3/8         &                     0               \\ \hline
\mathbb CP^2       \strutik 
&    1/4         &    3/4         &                     0               \\ \hline
S^2\times T^2      \strutik 
&    5/8         &    3/8         &                     0               \\ \hline
S^3\times S^1      \strutik 
&     1          &     0          &                     0               \\ \hline
\mathbb RP^3\times S^1  \strutik 
&    5/8         &    3/8         &                     0               \\ \hline
\mathbb RP^2\times S^2  \strutik 
&    7/16        &    3/16        &                    3/16             \\ \hline
\mathbb RP^2\times T^2  \strutik 
&   67/256       &    63/256      &                   63/256            \\ \hline
\mathbb RP^2\times \mathbb RP^2  \strutik 
&    1/4         &    9/32        &                   15/64             \\ \hline
\mathbb RP^4  \strutik 
&    7/16        &    3/16        &                    3/16   
       \end{array}
\end{align*}
\bigskip
\begin{align*}
\text{For }F=\mathbb F_8\colon\qquad\qquad  &
       \begin{array}{c|c|c}
M             
& \mathrm P_M(0) & \mathrm P_M(\text{any other value}) \\ \hline
S^4 \strutik          
&       1        &                 0                   \\ \hline
T^4 \strutik          
&    71/512      &              63/512                 \\ \hline
S^2\times S^2 \strutik 
&     11/32      &                3/32                 \\ \hline
S^2\tildetimes S^2 \strutik 
&      1/8       &                 1/8                 \\ \hline
\mathbb CP^2       \strutik 
&      1/8       &                 1/8                 \\ \hline
S^2\times T^2      \strutik 
&     11/32      &                3/32                 \\ \hline
S^3\times S^1      \strutik 
&       1        &                 0                   \\ \hline
\mathbb RP^3\times S^1  \strutik 
&     11/32      &                3/32                 \\ \hline
\mathbb RP^2\times S^2  \strutik 
&     15/64      &                 7/64                \\ \hline
\mathbb RP^2\times T^2  \strutik 
&    519/4096    &              511/4096               \\ \hline
\mathbb RP^2\times \mathbb RP^2  \strutik 
&      1/8       &                 1/8                 \\ \hline
\mathbb RP^4  \strutik 
&     15/64      &                 7/64     
       \end{array}
\end{align*}
\bigskip
\begin{align*}
 \begin{matrix}\text{For }F=\mathbb F_{16};\\[\medskipamount]
               \text{here }\sqrt[5]{1}\text{ is}\\
               \text{any such }x\\
               \text{that }x^5=1\colon\end{matrix}\quad & % instead of \qquad, to avoid overfull \hbox
       \begin{array}{c|c|c|c}
M             
& \mathrm P_M(0) & \mathrm P_M(\sqrt[5]{1}) & \mathrm P_M(\text{any other value}) \\ \hline
S^4                \strutik           
&     1          &     0          &                     0               \\ \hline
T^4  \strutik
&  2213/32768    &  1953/32768    &                2079/32768           \\ \hline
S^2\times S^2      \strutik 
&    23/128      &   3/128        &                   9/128             \\ \hline
S^2\tildetimes S^2 \strutik
&    23/128      &   3/128        &                   9/128             \\ \hline
\mathbb CP^2       \strutik 
&    1/16        &    3/16        &                     0               \\ \hline
S^2\times T^2      \strutik 
&    23/128      &   3/128        &                   9/128             \\ \hline
S^3\times S^1      \strutik 
&     1          &     0          &                     0               \\ \hline
\mathbb RP^3\times S^1  \strutik 
&    23/128      &   3/128        &                   9/128             \\ \hline
\mathbb RP^2\times S^2  \strutik 
&   31/256       &   15/256       &                   15/256            \\ \hline
\mathbb RP^2\times T^2  \strutik 
&  4111/65536    &  4095/65536    &                 4095/65536          \\ \hline
\mathbb RP^2\times \mathbb RP^2  \strutik 
&    1/16        &   129/2048     &                  255/4096           \\ \hline
\mathbb RP^4  \strutik 
&    31/256      &   15/256       &                   15/256  
       \end{array}
\end{align*}

\subsection{A few results for twisted tori}\label{ss:twt}

Below are our first results for (the simplest) twisted tori. First, we denote~$\tilde T_n^3$ the fiber bundle with base~$S^1$, fiber~$T^2$, and monodromy matrix~$\begin{pmatrix}1&n\\ 0&1\end{pmatrix}$. Such fiber bundles are \emph{three-dimen\-sional} twisted tori. Then we consider \emph{four-dimen\-sional} twisted tori~$\tilde T_n^4$ defined simply as direct products of~$\tilde T_n^3$ with a circle:
\[
\tilde T_n^4 = \tilde T_n^3 \times S^1.
\]
We also compare the results with those for the usual torus $\tilde T_0^4 = T^4$.

This time, we make calculations using three different cocycles: \eqref{2,3a}, \eqref{2,3b}, and their sum.

\begin{remark}
And we will see that the results for the sum of two cocycles do \emph{not} appear to be easily predicted from the results for the summands.
\end{remark}

\subsubsection*{`Rough' invariant~\textbf{(\ref{logrough})}}
\[
\begin{array}{c|c}
M & I_{\text{rough}}(M) \\ \hline
\tilde T_0^4=T^4 \strutik & 6 \\ \hline
\tilde T_1^4 \strutik & 4 \\ \hline
\tilde T_2^4 \strutik & 6 \\ \hline
\tilde T_3^4 \strutik & 4 \\ \hline
\tilde T_4^4 \strutik & 6
\end{array}
\]

\subsubsection*{`Refined' invariants~\textbf{(\ref{inv-g})} using either cocycle~\textbf{(\ref{2,3a})} or cocycle~\textbf{(\ref{2,3b})}}

It turns out that the results are exactly the same for these cocycles. See, however, the results for their \emph{sum} below!

\begin{align*}
\text{For }F=\mathbb F_2\colon\qquad\qquad  &
       \begin{array}{c|c|c}
M             
& \mathrm P_M(0) & \mathrm P_M(1)  \\ \hline
\tilde T_0^4,\; \tilde T_1^4,\; \tilde T_2^4,\; \tilde T_3^4,\; \tilde T_4^4 	\strutik           
&     1          &     0   
       \end{array}
\end{align*}

\begin{align*}
\text{For }F=\mathbb F_4\colon\qquad\qquad  &
       \begin{array}{c|c|c|c}
M             
& \mathrm P_M(0) & \mathrm P_M(1) & \mathrm P_M(\text{any other value}) \\ \hline
\tilde T_0^4,\; \tilde T_2^4,\; \tilde T_4^4	\strutik           
&   65/128       &    63/128      &  0 \\ \hline
\tilde T_1^4,\; \tilde T_3^4	  \strutik
&    17/32       &     15/32      &  0  
       \end{array}
\end{align*}

\begin{align*}
\text{For }F=\mathbb F_8\colon\qquad\qquad  &
       \begin{array}{c|c|c}
M             
& \mathrm P_M(0) & \mathrm P_M(\text{any other value})  \\ \hline
\tilde T_0^4,\; \tilde T_2^4,\; \tilde T_4^4	\strutik           
&    71/512      &     63/512      \\ \hline
\tilde T_1^4,\; \tilde T_3^4	  \strutik
&    23/128      &     15/128     
       \end{array}
\end{align*}

\begin{align*}
\text{For }F=\mathbb F_{16}\colon\quad  &
       \begin{array}{c|c|c|c}
M             
& \mathrm P_M(0) & \mathrm P_M(\sqrt[5]{1}) & \mathrm P_M(\text{any other value}) \\ \hline
\tilde T_0^4,\; \tilde T_2^4,\; \tilde T_4^4	\strutik           
&   2213/32768   &    1953/32768   &  2079/32768   \\ \hline
\tilde T_1^4,\; \tilde T_3^4	  \strutik
&    173/2048    &    105/2048     &  135/2048     
       \end{array}
\end{align*}

\subsubsection*{`Refined' invariants~\textbf{(\ref{inv-g})} using cocycle \textbf{(\ref{2,3a})}+\textbf{(\ref{2,3b})}}
\begin{align*}
\text{For }F=\mathbb F_2\colon\qquad\qquad  &
       \begin{array}{c|c|c}
M             
& \mathrm P_M(0) & \mathrm P_M(1)  \\ \hline
\tilde T_0^4,\; \tilde T_1^4,\; \tilde T_2^4,\; \tilde T_3^4,\; \tilde T_4^4 	\strutik           
&     1          &     0   
       \end{array}
\end{align*}

\begin{align*}
\text{For }F=\mathbb F_4\colon\qquad\qquad  &
       \begin{array}{c|c|c|c}
M             
& \mathrm P_M(0) & \mathrm P_M(1) & \mathrm P_M(\text{any other value}) \\ \hline
\tilde T_0^4,\; \tilde T_4^4	\strutik           
&     1          &      0         &  0 \\ \hline
\tilde T_2^4     \strutik 
&     5/8        &     3/8        &  0  \\ \hline
\tilde T_1^4,\; \tilde T_3^4	 \strutik
&     1          &      0         &  0 
       \end{array}
\end{align*}

\begin{align*}
\text{For }F=\mathbb F_8\colon\qquad\qquad  &
       \begin{array}{c|c|c}
M             
& \mathrm P_M(0) & \mathrm P_M(\text{any other value})  \\ \hline
\tilde T_0^4,\; \tilde T_4^4	\strutik           
&     1          &     0           \\ \hline
\tilde T_2^4     \strutik 
&    11/32       &    3/32         \\ \hline
\tilde T_1^4,\; \tilde T_3^4      \strutik 
&     1          &     0     
       \end{array}
\end{align*}

\begin{align*}
\text{For }F=\mathbb F_{16}\colon\qquad  &
       \begin{array}{c|c|c|c}
M             
& \mathrm P_M(0) & \mathrm P_M(\sqrt[5]{1}) & \mathrm P_M(\text{any other value}) \\ \hline
\tilde T_0^4,\; \tilde T_4^4	\strutik           
&     1          &      0         &  0      \\ \hline
\tilde T_2^4     \strutik 
&   23/128       &     3/128      &  9/128  \\ \hline
\tilde T_1^4,\; \tilde T_3^4      \strutik 
&     1          &      0         &  0  
       \end{array}
\end{align*}

\section{Discussion}\label{s:d}  

Here are some possible directions for further research.  
  
\subsubsection*{Nonconstant set-theoretic relations}

The invariants proposed in this paper correspond to the simplest case of `constant'---essentially, the same for all pentachora~$u$---set~$R_u$ of permitted colorings. More general possibilities are known, however, where $R_u$ depends on a `usual' simplicial 2-cocycle, or, essentially, its cohomology class $h\in H^2(M,F^*)$, see~\cite{bosonic}. This means that one deformation of our invariant~$I_{\mathrm{rough}}(M)$ can be done even before quantization!

\subsubsection*{Nonconstant hexagon cocycles}

Also, our hexagon cocycles~$A_u$ in Section~\ref{s:actions} are the same for all pentachora~$u$. Some calculations suggest, however, that $A_u$ can be variable, even for a constant~$R_u$. How exactly $A_u$ can vary for different pentachora, remains at the moment a complete mystery.

\subsubsection*{Further calculations and a probabilistic approach}

There are more complicated hexagon cocycles than~\eqref{2,3a}, and of course there are much more 4-manifolds---see, e.g., the book~\cite{Scorpan}---for which we are going to make our calculations. In hard cases, it may make sense to calculate `probabilities'~\eqref{inv-g} \emph{approximately}, using many randomly chosen elements in~$L/W$ (see Subsection~\ref{ss:gs}), and compare these probabilities for different manifolds using statistical methods.

\subsubsection*{Exotic homologies}

Linear space~$R_M$ of permitted colorings in~\eqref{d:su} is obviously the kernel of a linear operator. Namely, subtract the r.h.s.\ of~\eqref{y-any-char} from its l.h.s., do this also for \emph{all} $N_3$ tetrahedra in the triangulation, and take the direct sum of the results (5-columns). This gives rise to the operator
\[
\hat R_M\colon \quad (\text{all triangulation colorings}) \to (5N_3\text{-columns}),
\]
and \eqref{d:su} acquires the form
\begin{equation}\label{uu}
\LinearImage \psi \subset \Ker \hat R_M.
\end{equation}
Inclusion~\eqref{uu} clearly makes think that linear mappings $\psi$ and~$\hat R_M$ may be part of a longer \emph{exotic chain complex}, while our factorspace~$L/W$ from Subsection~\ref{ss:gs} may be thought of as \emph{exotic homologies}.

\appendix

\section{Where the formulas for edge vectors and hexagon solutions come from}\label{s:app}

\emph{Edge vector} for edge~$b$, denoted~$\uppsi_b$, is by definition the image, under the mapping~$\psi$~\eqref{psi}, of such coloring of manifold~$M$ edges where $b$ has the color~$1\in \mathbb F$, while all the other edges have color~$0$.

This Appendix is, essentially, a brief exposition of one simple particular case of the results of paper~\cite{bosonic}.

\subsection{From linear dependencies between edge vectors to matrices: general position}\label{ss:rm}

Suppose there are linear dependencies between vectors~$\uppsi_b$ corresponding to edges~$b$ coming from each given vertex~$i$:
\begin{equation}\label{o:g}
\sum_j \gamma_{ij} \uppsi_{ij} = 0,
\end{equation}
where the sum is taken over all vertices~$j$ joined by an edge with~$i$, and $\gamma_{ij}\in \mathbb F$ are some coefficients. The reason behind~\eqref{o:g} will be explained in Subsection~\ref{ss:mr}.

We consider in this Subsection the case of \emph{generic}~$\gamma_{ij}$. This means, in our case, that there are no other linear dependencies between the restrictions of~$\uppsi_b$ onto pentachora or/and tetrahedra than follow directly from~\eqref{o:g}. This further implies that $\mathbb F$ is big enough so that it can contain necessary coefficients~$\gamma_{ij}$ (compare the specific case in the end of Subsection~\ref{ss:ex} with non-generic~$\gamma_{ij}$).

No other restrictions are put on $\gamma_{ij}$. Note that, in particular, $\gamma_{ij}$ and~$\gamma_{ji}$ are not related in any way.

For each separate pentachoron~$u$, there are five (the number of vertices) linear dependencies between ten (the number of edges) restrictions of edge vectors. Consequently, these latter generate exactly a 5-dimensional space; this will be by definition the space~$R_u$ of permitted colorings of the pentachoron 3-faces.

\begin{theorem}\label{th:o:g}
Coefficients~$\gamma_{ij}$ determine subspace~$R_u \subset \mathbb F^{10}$ up to automorphisms of five two-dimen\-sional linear spaces~$V_t$ belonging to the 3-faces~$t\subset u$.
\end{theorem}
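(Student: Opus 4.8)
The plan is to reduce the statement to a purely local computation on each 3-face $t\subset u$ and then to reassemble the results. Observe first that, by~\eqref{tnb}, the restriction $\uppsi_b|_t\in V_t$ of an edge vector to the summand $V_t\cong\mathbb F^2$ of $\mathbb F^{10}=\bigoplus_{t\subset u}V_t$ vanishes unless $b\subset t$; hence $V_t$ only ever sees the six edge vectors of the tetrahedron $t=ijkl$. Restricting the five dependencies~\eqref{o:g} to $V_t$, exactly the four coming from the vertices $i,j,k,l$ of $t$ survive as genuine relations among the six vectors $\uppsi_b|_t$, while the fifth, attached to the vertex of $u$ outside $t$, involves only edges not contained in $t$ and degenerates to $0=0$. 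I would collect the six restricted vectors as the columns of a $2\times6$ matrix $\Psi_t$, so that a dependency is an element of the right kernel $\Ker\Psi_t\subset\mathbb F^6$.

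The local claim I would prove is that the configuration $\{\uppsi_b|_t\}_{b\subset t}$ is determined by the coefficients $\gamma_{ij}$ up to a single automorphism $g_t\in\mathrm{GL}(V_t)$. Indeed, general position means precisely that the only dependencies among the $\uppsi_b|_t$ are those spanned by the four vertex-dependencies; since six vectors spanning the plane $V_t$ admit a $6-2=4$-dimensional space of dependencies, the four vertex-dependency vectors must be linearly independent and must fill out all of $\Ker\Psi_t$. Thus $\Ker\Psi_t$ is the fixed $4$-dimensional subspace of $\mathbb F^6$ determined by the $\gamma_{ij}$ alone. Any two rank-$2$ matrices $\Psi_t,\Psi_t'$ with this common right kernel factor through the quotient $\mathbb F^6/\Ker\Psi_t\cong V_t$ as isomorphisms composed with the common projection, and therefore differ by left multiplication by some $g_t\in\mathrm{GL}(V_t)$; this is exactly the asserted ambiguity on the single face $t$.

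To globalize, given two realizations $\{\uppsi_b\}$ and $\{\uppsi_b'\}$ with the same $\gamma_{ij}$, I would assemble the face-wise automorphisms into the single block-diagonal map $G=\bigoplus_{t\subset u}g_t\in\prod_{t}\mathrm{GL}(V_t)$. Because the summands $V_t$ are disjoint and $g_t$ touches only the $t$-component, the identity $g_t\,\uppsi_b|_t=\uppsi_b'|_t$ — which holds for $b\subset t$ by construction of $g_t$, and holds trivially ($0=0$) for $b\not\subset t$ by~\eqref{tnb} — yields $G\uppsi_b=\uppsi_b'$ for every edge $b$. Hence $G$ carries the span $R_u$ onto $R_u'$, so $R_u$ is determined by the $\gamma_{ij}$ up to $\prod_{t\subset u}\mathrm{GL}(V_t)$, which is precisely ``up to automorphisms of the five spaces $V_t$''.

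The crux — and the only place where general position is genuinely used — is the independence of the four vertex-dependencies within each face, equivalently that they exhaust $\Ker\Psi_t$; once this is in hand, both the local normal form and the reassembly are formal linear algebra. The conceptual point I would stress is that no automorphism mixing different $V_t$ is ever needed: because each edge vector meets $V_t$ only through its $t$-component, the total ambiguity is genuinely a \emph{product} of independent plane automorphisms, and the block-diagonal structure of $G$ is exactly what encodes this.
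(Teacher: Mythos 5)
Your proof is correct and follows essentially the same route as the paper's: both localize to each tetrahedron $t\subset u$, note that the four dependencies~\eqref{o:g} attached to the vertices of~$t$ restrict to relations among the six columns~$\psi_{t,b}$, and invoke general position to conclude that these relations pin the configuration down up to a single automorphism of~$V_t$. Your kernel/factorization formulation and the explicit block-diagonal globalization $G=\bigoplus_{t\subset u}g_t$ are simply a more detailed write-up of what the paper phrases as ``choose two columns freely; the remaining four are then determined uniquely.''
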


\begin{proof}
Take a tetrahedron $t\subset u$ and consider the six (the number of edges $b\subset t$) 2-columns~$\psi_{t,b}$. From~\eqref{o:g}, four (the number of vertices) linear dependencies follow between~$\psi_{t,b}$. This means that we are free to choose arbitrarily (linearly independent) columns for two of~$\psi_{t,b}$; another choice will correspond exactly to an automorphism of~$V_t$. The remaining four~$\psi_{t,b}$ are then determined uniquely.
\end{proof}

\subsection{From linear dependencies between edge vectors to matrices: specific examples}\label{ss:ex}

Choose now the coefficients~$\gamma_{ij}$ as follows:
\begin{equation}\label{o:gM}
\gamma_{ij} = \begin{cases} 1, & \text{if \ }i<j,\\ M, & \text{if \ }i>j, \end{cases}
\end{equation}
where $M$ is, at the moment, \emph{generic}. A \emph{direct calculation} (valid in any field characteristic) shows then that Theorem~\ref{th:o:g} remains in force, and the 2-columns~$\psi_{t,b}$ can be chosen as follows:
\begin{equation}\label{o:psi}
\begin{pmatrix} \psi_{t,ij} & \psi_{t,ik} & \psi_{t,il} & \psi_{t,jk} & \psi_{t,jl} & \psi_{t,kl} \end{pmatrix} = \begin{pmatrix} -1 & 1-M & M & M & 0 & -M \\ -1/M & 1/M & 0 & 0 & 1 & -1 \end{pmatrix} .
\end{equation}
Here tetrahedron $t=ijkl$, and the six $\psi_{t,b}$ are written out as a $2\times 6$ matrix, as we did in~\eqref{d:psi}. Subspace~$R_u$ consists, for an example pentachoron $u=12345$, of vectors satisfying
\begin{equation}\label{o:yx}
\begin{pmatrix} y_{2345}\\ y_{1345}\\ y_{1245}\\ y_{1235}\\ y_{1234} \end{pmatrix} = 
A \begin{pmatrix} x_{2345}\\ x_{1345}\\ x_{1245}\\ x_{1235}\\ x_{1234} \end{pmatrix}, 
\end{equation}
where
\begin{equation}\label{o:AM}
A = \begin{pmatrix}0 & (M-1)/M^2 & 1/M^2 & -1/M & (M-1)/M^2\\
0 & 1/M & 0 & -1/M & 1/M\\
1/M & (1-M)/M^2 & (M-1)/M^2 & 0 & 1/M^2\\
1/M & (1-2 M)/M^2 & (M-1)/M^2 & 1/M & (1-M)/M^2\\
0 & -1/M & 1/M & 0 & 0\end{pmatrix}.
\end{equation}

We think that the case $M=-1$ is especially interesting. The interest is due to the fact that the restrictions of linear dependencies~\eqref{o:g} onto either one pentachoron on one tetrahedron \emph{are themselves no longer independent}, and Theorem~\ref{th:o:g} is not valid for them. Nevertheless, $M=-1$ can well be substituted into \eqref{o:psi} and~\eqref{o:AM}, which was exactly how we obtained our formulas \eqref{d:psi} and~\eqref{y-any-char}.

\subsection{From matrices to linear dependencies between edge vectors}\label{ss:mr}

Now we consider a way backwards; we will do it only for one pentachoron. Let a five-dimen\-sional subspace $R_u\subset \mathbb F^{10}$ be given of permitted colorings of pentachoron $u=12345$; let this $R_u$ be generic. Here is how edge vectors are constructed, and how linear dependencies~\eqref{o:g} between them appear.

An edge vector has vanishing $x$- and $y$-components at the two tetrahedra \emph{not} not containing the given edge. This gives four conditions, and these single out exactly a one-dimen\-sional space of edge vectors from the five-dimen\-sional~$R_u$.

Consider now the edge vectors for edges $12$, $13$, $14$ and~$15$. They all have zero $x$- and $y$-components at tetrahedron~$2345$; note that the vanishing of these two components singles out a \emph{three-dimen\-sional} subspace from~$R_u$. As the four vectors lie in a three-dimen\-sional space, there is necessarily a linear dependence of type~\eqref{o:g} between them.

\end{document}